\numberwithin{equation}{section}
\newtheorem{lemma}{Lemma}[section]
\newtheorem{defi}{Definition}[section]
\newtheorem{theorem}{Theorem}[section]
\newtheorem{rem}{Remark}[section]
\begin{document}
\title[Uniqueness of static, isotropic low-pressure solutions of the EV-system]{Uniqueness of static, isotropic low-pressure solutions of the Einstein-Vlasov system}
\author{Tomohiro Harada \& Maximilian Thaller}

\address{
Tomohiro Harada\newline
Department of Physics, \newline
Rikkyo University, Toshima, Tokyo 171-8501, Japan \newline
harada@rikkyo.ac.jp
}

\address{
Maximilian Thaller \newline
Chalmers University of Technology, \newline
Department of Mathematical Sciences, \newline
412 96 Gothenburg, Sweden \newline
maxtha@chalmers.se
}

\maketitle

\begin{abstract}
In \cite{bs90} the authors prove a uniqueness theorem for static solutions of the Einstein-Euler system which applies to fluid models whose equation of state fulfills certain conditions. In this article it is shown that the result of \cite{bs90} can be applied to isotropic Vlasov matter if the gravitational potential well is shallow. To this end we first show how isotropic Vlasov matter can be described as a perfect fluid giving rise to a barotropic equation of state. This {\em Vlasov} equation of state is investigated and it is shown analytically that the requirements of the uniqueness theorem are met for shallow potential wells. Finally the regime of shallow gravitational potential is investigated by numerical means. An example for a unique static solution is constructed and it is compared to astrophysical objects like globular clusters. Finally we find numerical indications that solutions with deep potential wells are not unique.
\end{abstract}

\tableofcontents

\section{Introduction}
In \cite{bs90} R.~Beig and W.~Simon prove that a static solution of the Einstein-Euler system with matter quantities of compact support is spherically symmetric and uniquely determined by the surface potential of the fluid body if the matter fulfills certain assumptions. This is done by showing that a static solution of the Einstein-Euler system is isometric to a spherically symmetric solution if the assumptions of their theorem are satisfied. \par
This theorem thus implies two statements. First that a solutions satisfying the assumptions is spherically symmetric. Second, that it is uniquely determined by the surface potential. For perfect fluids with Newtonian gravity it is in fact already known for a long time that static solutions automatically are spherically symmetric \cite{l33}. In the framework of General Relativity it turned out to be surprisingly difficult to establish comparable results. An important step is the analysis of Masood-ul-Alam \cite{m88} published in 1988. A uniqueness result is given which however only applies to restricitve and somewhat unphysical assumptions on the equation of state. In the following years the analysis could be extended to different types of equations of state which are physically more relevant. See \cite{l92} for a chronological overview of results on symmetry and uniqueness. To our knowledge the question whether isotropic Vlasov matter, which can be seen as perfect fluid, is covered by these uniqueness results has not been addressed yet. \par
This article is concerned with the Einstein-Vlasov system and it is investigated under which assumptions the main theorem of \cite{bs90} can be applied to it. We have for example the question in mind if a globular cluster necessarily is spherically symmetric. Vlasov matter is a natural choice of matter model to describe this situation. Generally, Vlasov matter possesses features that cannot be described by a perfect fluid. In particular the momenta of the particles can be distributed anisotropically. It is known that there exist static, anisotropic solutions of the Einstein-Vlasov system which are not spherically symmetric \cite{akr11}. Isotropic solutions of the static Einstein-Vlasov system however resemble perfect fluid solutions much more. In the non-relativistic case it is known that isotropic static solutions, so called {\em steady states}, are necessarily spherically symmetric and unique in a certain sense. This is a byproduct of the method of proof of existence. It is shown that a steady state is a minimizer of the so called energy-Casimir functional. These minimizers then turn out to be spherically symmetric. See \cite{rein} for details. For the Einstein-Vlasov system similar methods have not yet been successfully applied. Some progress has been made under the assumption that the considered steady states are not very relativistic in a certain sense \cite{hr12}.  Existence of isotropic, spherically symmetric static solutions of the Einstein-Vlasov system has been established by other methods \cite{rr93}. \par
This picture, that not very relativistic static solutions of the Einstein-Vlasov system are necessarily spherically symmetric whereas it is unclear for highly relativistic ones, is confirmed in this article. We show that under the assumption of isotropy the energy momentum tensor of Vlasov matter is described by two functions $\varrho$ and $p$ which can be seen as energy density and pressure of a perfect fluid satisfying a barotropic equation of state. In terms of these functions $\varrho$ and $p$, an additional function $I$, defined in (\ref{eq_def_i}) below, can be introduced. By the uniqueness theorem of \cite{bs90} a static fluid solution is unique if $I\leq 0$. So we analyze the equation of state resulting from isotropic Vlasov matter and show analytically that $I \leq 0$ in the regime of relatively low pressure. The applied method is very robust so that the effect of different choices of ansatz functions for the particle distribution can be studied. We are able to characterize a large class of solutions which will be unique in the {\em low-pressure regime}. At the same time we are able to give criteria on the particle distribution function revealing that the resulting equation of state is not compatible with the uniqueness result of \cite{bs90}.\par
In the last part of the paper the low-pressure regime is investigated further by numerical means. We calculate explicit examples of static solutions of the Einstein-Vlasov system in spherical symmetry with low pressures. Due to the analytical result of this article we conclude that these solutions are unique. Further, we discuss numerical indications that in the high-pressure regime static solutions are not unique, namely we find different spherically symmetric solutions which have the same surface potential. \par
For an isotropic particle distribution function of the Vlasov matter, the maximum pressure and the {\it concentration parameter} $\Gamma$ of a spherically symmetric solution are correlated. Therefore static solutions of the Einstein-Vlasov system with a low concentration parameter $\Gamma$ are necessarily spherically symmetric. We calculate the maximum concentration parameter in the low pressure regime for an example family of particle distribution functions and set this into relation to observational values of existing astrophysical objects. At the example of the numerically calculated family of solutions we will see that solutions with a concentration parameter comparable to neutron stars are not in the low pressure regime and the main theorem of \cite{bs90} cannot be applied. Stars or globular clusters however are in this regime.

\subsection*{Acknowledgments}
We thank J\'er\'emie Joudioux for mentioning \cite{bs90} to us, even though he might have had completely different questions in mind. T.H.~is grateful to T.~Shiromizu for fruitful discussion. M.T.~thanks H\aa{}kan Andr\'easson and Simone Calogero for helpful discussions. Moreover M.T.~is grateful the Department of Theoretical Physics at Rikkyo University for hospitality during a research stay June -- August 2017, as well as the Japan Society for the Promotion of Science (JSPS) and the Swedish Foundation for International Cooperation in Research and Higher Education (STINT) for financial support. This work was partially supported by JSPS KAKENHI Grant No. JP26400282 (T.H.).

\section{Preliminaries}

\subsection{Fluid models}

In this article units in which $G=c=1$ are used. Moreover we use the Einstein summation convention. Greek indices run from 0 to 3 and Latin indices run from 1 to 3. Let $\mathscr M$ be a four dimensional manifold equipped with a Lorentzian metric $g$. We assume that $(\mathscr M, g)$ is a static space-time. This means that there exists a three dimensional manifold $\Sigma$ such that $\mathscr M \cong \mathbb R \times \Sigma$ and there exist coordinates $t$, $x^1$, $x^2$, $x^3$ such that the metric $g$ can be written as
\begin{equation} \label{eq_metric}
g = -V^2\left(x^1,x^2,x^3\right) \mathrm dt^2 + \gamma_{ab}\left(x^1, x^2, x^3\right) \mathrm dx^a \mathrm dx^b,
\end{equation}
where $V\in C^1\left(\Sigma; \mathbb R\right)$ and the Riemannian metric $\gamma_{ab}$ is the restriction of $g$ to $\Sigma$. \par
Assume that on $\Sigma$ we have two functions $\varrho,p\in C^2(\Sigma; \mathbb R)$. Assume furthermore that the support of these functions is compact and let $Q\subset \Sigma$ be an open set such that $\bar Q = \mathrm{supp}(\varrho) \cup \mathrm{supp}(p)$. If the functions $\varrho$ and $p$ are related by a barotropic equation of state, i.e.~$\varrho=\varrho(p)$, $\mathrm d\varrho/\mathrm d p \geq 0$, then they give rise to a {\em fluid model} which we define as in \cite{bs90}.

\begin{defi} (Fluid model) \label{def_fluid_model} \\
Let $\varrho,p\in C^2(\Sigma; \mathbb R)$ satisfy a barotropic equation of state, i.e.~$\varrho=\varrho(p)$, $\mathrm d\varrho/\mathrm d p \geq 0$. A corresponding fluid model is a triple $(\Sigma, \gamma_{ab}, V)$, where $\Sigma$ is a three dimensional Riemannian manifold endowed with the metric $\gamma_{ab}$, and $V\in C^1(\Sigma; \mathbb R)$ such that the Einstein equations 
\begin{subequations}
\begin{eqnarray} 
R_{ab} &=& \frac{1}{V} D_a D_b V + 4\pi (\varrho-p) g_{ab}, \label{einst_eq_1} \\
\Delta V &=& 4\pi V (\varrho + 3p), \label{einst_eq_2}
\end{eqnarray}
\end{subequations}
hold. Here $D_a$ is the covariant derivative formed from $\gamma_{ab}$, $\Delta = \gamma^{ab} D_a D_b$, and $R_{ab}$ is the Ricci tensor formed from $\gamma_{ab}$.
\end{defi}

Next we define the quantity
\begin{equation} \label{eq_def_i}
I = \frac 1 5 \kappa^2 + 2\kappa + (\varrho + p) \frac{\mathrm d\kappa}{\mathrm dp},\qquad \mathrm{where}\quad \kappa = \frac{\varrho + p}{\varrho + 3p}\frac{\mathrm d\varrho}{\mathrm dp}.
\end{equation}
We review the main theorem of \cite{bs90} that this article relies on.

\begin{theorem} (Beig \& Simon, 1990)\label{theo_bs} \\
Assume we are given a static perfect fluid model $(\Sigma, \gamma_{ab}, V)$ with equation of state satisfying $I\leq 0$, and a spherically symmetric solution $(\mathbb R^3, ^0\gamma_{ab}, ^0V)$. Then the given model and the spherically symmetric solution are isometric.
\end{theorem}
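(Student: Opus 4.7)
The plan is to adapt the conformal method of Masood-ul-Alam \cite{m88} to the fluid setting. The idea is to produce, by two different conformal rescalings of $\gamma_{ab}$, an asymptotically flat Riemannian three-manifold with nonnegative scalar curvature and zero ADM mass, and then invoke the rigidity part of the positive mass theorem to conclude that the resulting manifold is isometric to flat $\mathbb{R}^3$. Undoing the conformal transformations will then identify $(\Sigma,\gamma_{ab},V)$ with the spherically symmetric reference solution, since the conformal factors are built only from $V$ and from quantities that, by the barotropic equation of state, are themselves functions of $V$.

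More concretely, I would first take two isometric copies of $(\Sigma,\gamma_{ab})$ and rescale them by conformal factors $\Omega_\pm$ depending only on $V$, chosen in the vacuum region (where $\Delta V=0$) so that the rescaled metrics are scalar flat and the two copies join smoothly across a point at infinity after an inversion, producing an asymptotically flat manifold $(\hat\Sigma,\hat\gamma)$ of vanishing mass. Inside the fluid region $Q$ the conformal factor has to be modified so as to match continuously and with continuous normal derivative across the boundary $\partial Q = \{p=0\}$; the freedom to do so comes from the Tolman--Oppenheimer--Volkoff identity $V\,\mathrm d p/\mathrm d V = -(\varrho+p)$ together with the barotropic assumption, which render every matter quantity a function of $V$ alone. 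A direct computation of the scalar curvature $\hat R$ of the interior rescaled metric, using \eqref{einst_eq_1}--\eqref{einst_eq_2} and the ODE that defines $\Omega$, gives an expression of the schematic form $\hat R \propto -I\,|DV|^2$, in which the quantities $\kappa$ and $I$ from \eqref{eq_def_i} appear naturally as the logarithmic derivative of $\Omega$ with respect to $V$ and its combination with the source $\varrho+3p$. Hence the hypothesis $I\leq 0$ is exactly what is needed to guarantee $\hat R\geq 0$ inside $Q$.

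With $(\hat\Sigma,\hat\gamma)$ asymptotically Euclidean, of zero ADM mass, and of nonnegative scalar curvature in a suitable sense across $\partial Q$, the rigidity of the positive mass theorem forces $(\hat\Sigma,\hat\gamma)$ to be isometric to $(\mathbb{R}^3,\delta_{ab})$. In particular $\hat\gamma$ is conformally flat, and unravelling the construction implies that $\gamma_{ab}$ itself is conformally flat with conformal factor a function of $V$; combined with staticity and the Einstein equations this shows that the level sets of $V$ are round spheres and that $\gamma_{ab}$ is spherically symmetric. The radial profile is then uniquely determined by the equation of state and the boundary value of $V$, so the given solution must coincide with $({^0\gamma}_{ab},{^0V})$ up to isometry.

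The main obstacle I expect is the scalar-curvature computation inside $Q$ together with the matching across $\partial Q$. One has to choose the interior conformal factor $\Omega$ as the solution of an ODE in $V$ whose coefficients involve $\varrho$, $p$ and their derivatives; verifying that this ODE admits a global solution with the correct asymptotics at the surface $V=V_s$ and at the centre, and that the resulting $\hat R$ really reduces to a multiple of $-I\,|DV|^2$ rather than to some larger expression, is where the particular combination $\tfrac{1}{5}\kappa^2 + 2\kappa + (\varrho+p)\,\mathrm d\kappa/\mathrm d p$ must emerge naturally. Once that analytical step is secured, the positive-mass rigidity, the passage from conformal flatness to spherical symmetry, and the uniqueness of the radial profile are essentially standard.
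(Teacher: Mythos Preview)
The paper does not prove Theorem~\ref{theo_bs}. It is stated there purely as a quoted result from \cite{bs90}, and the paper's own contribution (Theorem~\ref{main_the}) consists in verifying the hypothesis $I\leq 0$ for the Vlasov equation of state in the low-pressure regime so that Theorem~\ref{theo_bs} can be invoked. There is therefore no ``paper's own proof'' to compare your proposal against.

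That said, your outline is a fair high-level summary of the actual Beig--Simon argument in \cite{bs90}, which does proceed by the Masood-ul-Alam conformal technique: exterior conformal factors $\tfrac12(1\pm V)$ to produce a zero-mass asymptotically flat double, an interior conformal factor determined by an ODE in $V$ whose existence uses the spherically symmetric comparison solution, a scalar-curvature computation in which the quantity $I$ of \eqref{eq_def_i} appears as the sign-controlling coefficient, and then positive-mass rigidity. Two points where your sketch is looser than the genuine proof: first, the interior conformal factor is not chosen by an abstract ODE but is taken to be the one already realised by the spherically symmetric solution $({}^0\gamma_{ab},{}^0V)$, which is why the theorem is phrased as a comparison with a given spherical model rather than as an a priori symmetry statement; second, the matching across $\partial Q$ is only $C^{1,1}$, so one needs the positive mass theorem in a low-regularity version, and the passage from conformal flatness of $\gamma_{ab}$ back to full isometry with the spherical model requires an additional argument (in \cite{bs90} this goes through showing $|DV|^2$ is a function of $V$ and then integrating the field equations). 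These are the places where a complete write-up would need more than what you have indicated.
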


The function $^0V$ turns, by the theorem, out to be the same as $V$ and it depends on the radial coordinate only. Moreover it is monotonically increasing. So the theorem is a uniqueness result in the sense that the value of $V$ at the boundary $\partial Q$ of the fluid body uniquely determines the space-time (for a fixed equation of state). The statement that all static solutions, which have an equation of state such that the assumptions of the theorem are satisfied, are spherically symmetric is an immediate consequence. We call this value of $V$ the {\em surface potential}. Later in the analysis (Section \ref{sect_main}), a cut-off energy $E_0$ will be introduced. Its value will be exactly this surface potential.

\subsection{Vlasov matter}

We consider an ensemble of particles in $\mathscr M$ which move along timelike geodesics. Let $x(\sigma) = \left(x^0(\sigma), x^1(\sigma),  x^2(\sigma),  x^3(\sigma)\right)$ be a future-directed geodesic and let
\begin{equation}
p^\mu(\sigma) := \frac{\mathrm d}{\mathrm d\sigma} x^\mu(\sigma)
\end{equation}
be the canonical momenta. Then $p^\mu(\sigma)$ fulfills the geodesic equation
\begin{equation}
\frac{d p^{\mu}}{d\sigma}-\Gamma^{\mu}_{\nu\lambda}p^{\nu}p^{\lambda}=0, \quad \mu=0,\dots,3.
\end{equation}
The rest mass $m$ of the particle following the geodesic $x^\mu(\sigma)$ is defined by
\begin{equation}  \label{eq_m_const}
m^2 = -g_{\mu\nu}(x(\sigma))\,p^{\mu}(\sigma)p^{\nu}(\sigma).
\end{equation}
It can be shown \cite{sz14} that the rest mass $m$ stays constant along the geodesic $x^\mu(\sigma)$. We note that the parameter $\sigma$ is proper time if and only if $m=1$. Otherwise we have $d\sigma=d\tau/m$ so that $p^{\mu}=dx^{\mu}/d\sigma$. \par
The mass shell $P_m$ is defined to be
\begin{equation} \label{def_mass_shell}
P_m = \{ (x,p)\in T\mathscr M\,:\,g_{\mu\nu}(x)p^\mu p^\nu = -m^2,\;p\;\mathrm{future}\,\mathrm{directed} \}.
\end{equation}
The mass shell is a seven dimensional submanifold of $T \mathscr M$ containing the lifts, to the tangent bundle $T\mathscr M$, of the future directed geodesics in $\mathscr M$, corresponding to particles with rest mass $m$. On $P_m$ we define the distribution function $f\in C^1\left(P_m;\mathbb R\right)$ of the particles with rest mass $m$ which satisfies the 
Vlasov equation,
\begin{equation} \label{vlasov_eq}
p^{\mu} \frac{\partial}{\partial x^{\mu}}f - \Gamma^{i}_{\nu  \lambda} p^{\nu} p^{\lambda} \frac{\partial}{\partial p^{i}}f = 0.
\end{equation} 
We will write $(p^1,p^2,p^3)$ as ${\bf p}$ and $(x^1,x^2,x^3)$ as ${\bf x}$ etc. The stress-energy tensor for $m>0$ is given by 
\begin{equation} \label{def_t}
T^{\mu\nu}(x^\sigma) = \frac{1}{m} \int_{P_{(m,x)}} f(x^\sigma,{\bf p}) p^{\mu} p^{\nu} \, \mu_{P_{(m,x)}},
\end{equation}
where $P_{(m,x)}$ is the fiber of the mass shell $P_m$ which is a submanifold of $T_x\mathscr M$, and $\mu_{P_{(m,x)}}$ is the volume form on $P_{(m,x)}$. For the massless case, $m=0$, a formula can be obtained by a continuity argument, cf.~(\ref{eq_for_t}) below.
We should note that $T^{\mu\nu}$ behaves as a covariant tensor in spite of the apparently three-dimensional volume integral. We also define the particle number current as 
\begin{equation} \label{def_n}
 N^{\mu}(x^\sigma)= \frac{1}{m} \int_{P_{(m,x)}} f(x^\sigma,{\bf p}) p^{\mu}\, \mu_{P_{(m,x)}}.
\end{equation}
One can show the conservation laws for the above quantities, $\nabla_{\mu}T^{\mu\nu} = 0$, $\nu=0,\dots,3$, and $\nabla_{\mu} N^{\mu} = 0 $. See \cite{a11} for a review article on the Einstein-Vlasov system and \cite{sz14} for more details on the geometric set-up.

\section{Tetrad description}

A solution $(\mathscr M, g, f)$ of the Einstein-Vlasov system is a Lorentzian metric $g$, defined on the manifold $\mathscr M$, such that the Einstein equations, $G_{\mu\nu} = 8\pi T_{\mu\nu}$, $\mu,\nu=0,\dots,3$, are satisfied where the Einstein tensor $G_{\mu\nu}$ is calculated from the metric $g$. Further, $f$ is a particle distribution function satisfying the Vlasov equation (\ref{vlasov_eq}) and giving via (\ref{def_t}) rise to the energy momentum tensor $T_\mu\nu$ on the right hand side of the Einstein equations. \par
In this section we show that a solution of the Einstein-Vlasov system with isotropic particle distribution function is a fluid model in the sense of Definition \ref{def_fluid_model}. To this end we first show in Lemma \ref{fluid_lem} how to express the energy momentum tensor (\ref{def_t}) of Vlasov matter in the form of a perfect fluid. In this section, by $\eta^{AB}$ we denote the components of the Minkowski metric, i.e.~$\eta^{00}=-1$, $\eta^{0I}=\eta^{I0}=0$ and $\eta^{IJ}=\delta^{IJ}$ for $I,J=1,2,3$. \par
The tangent bundle $T\mathscr M$ of $\mathscr M$ can be seen as eight dimensional manifold which is naturally equipped with the coordinates $x^\mu, p^\nu$, $\mu,\nu=0,\dots, 3$, where $p^\mu$ is the canonical momentum corresponding to the coordinate $x^\mu$. However, to formulate isotropic distribution, it is useful to introduce an orthonormal basis $\{e_{(A)}\}$ ($A=0,1,2,3$) for the tangent bundle, i.e.~$g(e_{(A)},e_{(B)}) = \eta_{AB}$, which we call a tetrad basis. We define $v^{(A)}$ as the components of the vector $p^{\mu}\partial_{x^{\mu}}$ with respect to this tetrad frame $\{e_{(A)}\}$ so that $p^{\mu}\partial_{x^{\mu}}=v^{(A)}e_{(A)}$, where and hereafter we adopt Einstein convention for summation with respect also to the tetrad components with $A,B=0,1,2,3$ and $I,J,K,L=1,2,3$. Let $e_{(A)}^\mu$ be the coefficients of this frame, i.e.~$e_{(A)} = e_{(A)}^\mu \partial_{x^\mu}$. Then we have the identity
\begin{equation} \label{frame_metr}
g^{\mu\nu} = \eta^{AB} e^{\mu}_{(A)} e^{\nu}_{(B)}.
\end{equation}
We use the notation $x:=(x^{0},x^{1},x^{2},x^{3})$ and ${\bf v}:=(v^{(1)},v^{(2)},v^{(3)})$ for abbreviation.

\begin{defi} \label{def_iso}
A matter distribution function $f\in C^1(P_m;\mathbb R)$ is called isotropic if there exist a tetrad basis $\{e_{(A)}\}$ ($A=0,1,2,3$) and a function $F: \mathbb R^4 \times \mathbb R_+ \to \mathbb R_+$ such that
\begin{equation}
f(x,{\bf p}) = F\left( x, v \right), \quad v^{2}= \delta_{IJ} v^{(I)} v^{(J)},
\end{equation}
for all $(x,{\bf v}) \in P_m$. We call $F$ an isotropic ansatz function.
\end{defi}

\begin{lemma} \label{fluid_lem}
Let $(\mathscr M, g)$ be a four dimensional Lorentzian space-time and $P_m$ the corresponding mass shell for $m \geq 0$, equipped with the coordinates $x$ and ${\bf p}$ as described above. Further, let $F: \mathbb R^4 \times \mathbb R_+ \to \mathbb R_+$ be an isotropic ansatz function for the matter distribution function $f$, satisfying
\begin{equation}
F \left(x, v \right)=\mathcal O(v^{-4-\epsilon}),\quad \epsilon>0, 
\end{equation}
for $v\in\mathbb R_+$. Further let
\begin{align}
\varrho(x) &:= 4\pi \int_0^\infty F \left(x, v\right) v^2 \sqrt{m^2+v^2} \, \mathrm dv, \label{def_rho} \\
p(x) &:= \frac{4\pi}{3} \int_0^\infty F \left(x, v\right) \frac{v^4}{\sqrt{m^2 + v^2}} \, \mathrm dv. \label{def_p}
\end{align}
Then there exists a unit timelike vector field $u$ such that the energy momentum tensor $T^{\mu\nu}$ defined in (\ref{def_t}) takes the form
\begin{equation} \label{eq_t_fluid}
T^{\mu\nu} = \varrho u^\mu u^\nu + p \left(u^\mu u^\nu + g^{\mu\nu}\right).
\end{equation}
\end{lemma}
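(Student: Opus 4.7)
The plan is to compute $T^{\mu\nu}$ in the tetrad frame, where isotropy of $F$ trivializes the angular integrations, and then to transform back to the coordinate basis using (\ref{frame_metr}), identifying $u^\mu := e^\mu_{(0)}$ as the fluid four-velocity. Orthonormality of the tetrad automatically yields $g_{\mu\nu} u^\mu u^\nu = \eta_{00} = -1$, so $u$ will be a unit timelike vector field.

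First, I would parametrize the fiber $P_{(m,x)}$ by the spatial tetrad components $\mathbf v = (v^{(1)},v^{(2)},v^{(3)}) \in \mathbb R^3$. The mass-shell relation $g_{\mu\nu} p^\mu p^\nu = \eta_{AB} v^{(A)} v^{(B)} = -m^2$ fixes $v^{(0)} = \sqrt{m^2+v^2}$ on the future component, and the Lorentz-invariant volume form on $P_{(m,x)}$ combines with the $1/m$ prefactor in (\ref{def_t}) to give the clean tetrad expression
\begin{equation*}
T^{(A)(B)}(x) = e^{(A)}_\mu e^{(B)}_\nu T^{\mu\nu}(x) = \int_{\mathbb R^3} F(x,v)\, v^{(A)} v^{(B)}\, \frac{\mathrm d^3 \mathbf v}{v^{(0)}},
\end{equation*}
which also admits a finite massless limit and so furnishes the definition in the case $m=0$.

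Next, I would evaluate the tetrad components separately using spherical coordinates in $\mathbf v$ together with the rotational symmetry of $F(x,v)$. The $(0,0)$ component yields $T^{(0)(0)} = \int F\, v^{(0)}\, \mathrm d^3 \mathbf v = 4\pi\int_0^\infty F(x,v)\sqrt{m^2+v^2}\, v^2\, \mathrm dv = \varrho(x)$, matching (\ref{def_rho}). The mixed components $T^{(0)(I)}$ vanish because the integrand is odd in $v^{(I)}$. For the spatial components, isotropy forces $T^{(I)(J)} = p\,\delta^{IJ}$ with
\begin{equation*}
p(x) = \frac{1}{3}\int_{\mathbb R^3} F(x,v)\,\frac{v^2}{v^{(0)}}\,\mathrm d^3 \mathbf v = \frac{4\pi}{3}\int_0^\infty F(x,v)\,\frac{v^4}{\sqrt{m^2+v^2}}\,\mathrm dv,
\end{equation*}
matching (\ref{def_p}). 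The decay assumption $F=\mathcal O(v^{-4-\epsilon})$ guarantees convergence, since the integrands grow at most like $v^3$ at large $v$.

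In tetrad components the result then reads $T^{(A)(B)} = \varrho\,\delta^A_0 \delta^B_0 + p\,(\delta^A_0 \delta^B_0 + \eta^{AB})$, which is the perfect-fluid form with $u^{(A)} = \delta^A_0$. Transforming back via $T^{\mu\nu} = e^\mu_{(A)} e^\nu_{(B)} T^{(A)(B)}$ and using (\ref{frame_metr}) to recognize $\eta^{AB} e^\mu_{(A)} e^\nu_{(B)} = g^{\mu\nu}$ produces (\ref{eq_t_fluid}) with $u^\mu := e^\mu_{(0)}$. The calculation is essentially bookkeeping; the only mildly delicate points will be fixing the convention for $\mu_{P_{(m,x)}}$ so that the $1/m$ prefactor is absorbed consistently and the expression extends continuously to $m=0$, and invoking the prescribed decay of $F$ to secure integrability.
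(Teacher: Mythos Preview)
Your proposal is correct and follows essentially the same route as the paper: compute the induced volume form on the fiber so that the $1/m$ in (\ref{def_t}) cancels to give $T^{\mu\nu}=e^\mu_{(A)}e^\nu_{(B)}\int F\,v^{(A)}v^{(B)}\,\frac{\mathrm d^3\mathbf v}{v^{(0)}}$, extend to $m=0$ by continuity, then use isotropy to reduce the tetrad components to $\varrho$, $0$, and $p\,\delta^{IJ}$, and finally set $u^\mu=e^\mu_{(0)}$. The only cosmetic difference is that you package the calculation in the tetrad components $T^{(A)(B)}$ before transforming back, whereas the paper keeps the frame vectors $e^\mu_{(A)}$ explicit throughout.
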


\begin{proof}
First we express the components $T^{\mu\nu}(x)$ of the energy momentum tensor, defined in (\ref{def_t}), in terms of ${\bf v}$ in the tangent space $T_x\mathscr M$. To this end we calculate the volume form $\mu_{P_{(m,x)}}$ of the fibre $P_{(m,x)}$ which is a submanifold of $T_x\mathscr M$. Note that the mass shell condition reads
\begin{equation} \label{eq_msc_v}
v^{(0)} =\sqrt{m^{2}+v^{2}}.
\end{equation}
The tangent space $T_x \mathscr M$ can be seen as a four dimensional manifold endowed with the Minkwoski metric in the coordinates $v^{(A)}$, $A=0,1,2,3$. For $m>0$, using (\ref{eq_msc_v}) we calculate the restriction $\rho$ of the Minkowski metric to $P_{(m,x)}$. We obtain
\begin{equation}
\rho = \delta_{IJ} \mathrm dv^{(I)} \mathrm dv^{(J)} - \delta_{IK} \delta_{JL} \frac{v^{(I)} v^{(J)}}{\left(v^{(0)}\right)^2} \mathrm dv^{(K)} \mathrm dv^{(L)}.
\end{equation}
Thus we have
\begin{equation} \label{rho_deg}
\mu_{P_{(m,x)}} = \sqrt{\left|\det(\rho)\right|} \mathrm dv^{(1)} \mathrm dv^{(2)} \mathrm dv^{(3)} = \frac{m}{v^{(0)}} \mathrm dv^{(1)} \mathrm dv^{(2)} \mathrm dv^{(3)}.
\end{equation}
This formula is valid in the massive case. In the massless case, however, the metric (\ref{rho_deg}) is degenerate. Since $p^\mu = e^\mu_{(A)} v^{(A)}$, a straightforward calculation using (\ref{frame_metr}) yields the formula
\begin{equation} \label{eq_for_t}
T^{\mu\nu}(x) = e^\mu_{(A)}\big|_x e^\nu_{(B)}\big|_x  \int_{\mathbb R^3} f(x,v) v^{(A)} v^{(B)} \frac{1}{v^{(0)}} \mathrm dv^{(1)} \mathrm dv^{(2)} \mathrm dv^{(3)}
\end{equation}
for the energy momentum tensor. The notation $e^\mu_{(A)}|_x$ denotes that the vector $e^\mu_{(A)}$ is evaluated at the space-time point $x$. By a continuity argument this formula is also valid in the massless case, $m=0$. Then the formulas (\ref{def_rho}) and (\ref{def_p}) for $\varrho$ and $p$ yield
\begin{equation}
T^{\mu\nu}(x) = e^\mu_{(0)} \big|_x e^\nu_{(0)} \big|_x \varrho(x) + \left(e^\mu_{(0)} \big|_x e^\nu_{(0)} \big|_x + g^{\mu\nu}(x) \right) p(x).
\end{equation}
Finally we check that
\begin{equation} \label{eq_u}
u(x) := e_{(0)} \big|_x=e^\mu_{(0)} \big|_x\,\partial_{x^\mu}
\end{equation}
is a timelike unit vector field. 
Using $g\left(e_{(0)},e_{(0)}\right)=-1$ one establishes all desired properties of $u$ and the lemma is shown.
\end{proof}

For the static case, we can identify $u=e_{(0)}=e^{0}_{(0)}\partial_{t}$ and write $F(x,v)=F({\bf x},v)$, where ${\bf x}:=(x^{1},x^{2},x^{3})$. Taking an energy momentum tensor of the form (\ref{eq_t_fluid}) as right hand side to Einstein's equations $G_{\mu\nu}=8\pi T_{\mu\nu}$ and calculating the Einstein tensor $G_{\mu\nu}$ on the left hand side from a metric of the form (\ref{eq_metric}) one finds the system (\ref{einst_eq_1})--(\ref{einst_eq_2}) in Definition \ref{def_fluid_model} of a fluid model. \par
By inspection of the formulas (\ref{for_rho}) and (\ref{for_p}) below one notices that both $\varrho$ and $p$ are decreasing with respect to $V$. Thus $p(V)$ can be inverted and we write $V(p)=p^{-1}(p)$. Further, $\varrho$ and $p$ fulfill a barotropic equations of state. By slight abuse of notation we write $\varrho(p) = \varrho(V(p))$. In conclusion, in this section we have seen that an isotropic static solution of the Einstein-Vlasov system is a fluid model in the sense of Definition \ref{def_fluid_model}.

\section{Main result} \label{sect_main}

In this section we set $m=1$ and we assume that the metric $g$ is of the form (\ref{eq_metric}). Since $(\mathscr M,g)$ is a static space-time by assumption, the timelike vector field $\partial_t$ is Killing. Then the quantity
\begin{equation} \label{def_e}
E = -g\left( p^\mu, \partial_t \right) = V({\bf x})^2 p^0 = V({\bf x}) \sqrt{1+v^2}, \qquad {\bf x}\in \mathbb R^3
\end{equation}
is conserved along the geodesics with tangent vector $p^\mu$. Note that the mass shell condition (\ref{def_mass_shell}) and the frame components have been used in the formula (\ref{def_e}) for the particle energy $E$, and recall that we denote $v^{2}=\delta_{IJ}v^{(I)} v^{(J)}$. This implies that if $f$ depends on $x$ and $v$ only indirectly via $E$ it satisfies the Vlasov equation. Henceforth we assume that $f$ is a function of $E$. We denote this function by $\Phi$. Moreover, we assume that there exists a cut-off energy $E_0 > 0$. This means that it is assumed that no particle has energy $E$ larger than this value. In other words $\Phi(E) = 0$ if $E > E_0$. For technical reasons we introduce the function $\phi : (-\infty,1]\to\mathbb R$, vanishing on $(-\infty,0)$, so that we can write 
\begin{equation}
F({\bf x}, v)= \Phi(E({\bf x},v)) =: \phi\left(1-\frac{E({\bf x},v)}{E_0}\right).
\end{equation}
Since $E$ depends only on the absolute value $v$ of ${\bf v}$, $\Phi$ is an isotropic particle distribution function, cf.~Definition \ref{def_iso}. \par
From a change of variables in the integrals (\ref{def_rho}) and (\ref{def_p}), given by
\begin{equation}
v=\sqrt{\left(\frac{E}{V}\right)^{2}-1},\quad dv =\frac{\frac{E}{V}}{\sqrt{\left(\frac{E}{V}\right)^{2}-1}}\frac{dE}{V},
\end{equation}
we find the formulas
\begin{align}
\rho &= 4\pi \frac{1}{V^{3}}\int_{V}^{E_0} \phi\left(1-\frac{E}{E_0}\right) E^{2}\sqrt{\left(\frac{E}{V}\right)^{2}-1} \, \mathrm dE, \label{for_rho} \\
p &= \frac{4\pi}{3}\frac{1}{V}\int_{V}^{E_0} \phi\left(1-\frac{E}{E_0}\right)
\left[\left(\frac{E}{V}\right)^{2}-1\right]^{3/2}\, \mathrm dE.  \label{for_p}
\end{align}
Since $\varrho(V) = p(V) = 0$ for all $V\geq E_0$ we call $E_0$ the {\em surface potential}. \par The results presented in this paragraph hold for ansatz functions $\phi$ that satisfy the following assumptions. We assume that $\phi : (-\infty,1] \to \mathbb R_+$ is an analytic function on $[0,1]$, $\phi(x)=0$ if $x<0$, and that
\begin{equation} \label{ass_phi}
\exists n\in\mathbb N\, : \qquad \phi'(0) = \dots = \phi^{(n-1)}(0) = 0, \quad \phi^{(n)}(0) > 0.
\end{equation}
In particular this implies, that the $n$-th derivative is discontinuous at $0$. The step function $\phi(x) = \chi_{[0,1]}(x)$, or functions like $\phi(x) = [x]_+$ and $e^x\chi_{[0,1]}(x)$ meet these requirements. Now we are ready to state the main result.

\begin{theorem} \label{main_the}
Let $E_0>0$ and let $\phi$ be an ansatz function satisfying the assumption (\ref{ass_phi}) with $n\leq 3$. Let $(\Sigma, \gamma_{ab}, V)$ be the fluid model corresponding to $\varrho$ and $p$ constructed from $\phi$ via (\ref{for_rho}) and (\ref{for_p}). Then, there exists $p_0>0$ such that if $\sup_{x\in\Sigma}p(x)\leq p_0$, the model is spherically symmetric and the unique solution of the Einstein-Vlasov system determined by $\phi$ and the surface potential $E_0$.
\end{theorem}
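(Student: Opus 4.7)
The plan is to verify the hypothesis of Theorem~\ref{theo_bs}, namely that the function $I$ defined in (\ref{eq_def_i}) satisfies $I\leq 0$ whenever $p$ is sufficiently small; the spherically symmetric reference solution needed on the other side of the comparison is furnished by the existence result of \cite{rr93}. It is convenient to work with the variable $\xi := 1-V/E_0$, so that the low-pressure regime corresponds to $\xi$ near $0$. In the integrals (\ref{for_rho}) and (\ref{for_p}) I substitute $E = E_0(1-s)$, which turns the integration range into $s\in[0,\xi]$ and factorises
\begin{equation*}
\Bigl(\frac{E}{V}\Bigr)^{\!2}-1 = \frac{(\xi-s)(2-s-\xi)}{(1-\xi)^2}.
\end{equation*}
Since $s\in[0,\xi]$ remains small, the analyticity assumption (\ref{ass_phi}) yields $\phi(s) = \frac{\phi^{(n)}(0)}{n!}\,s^n + O(s^{n+1})$, and both $\varrho$ and $p$ reduce, to leading order in $\xi$, to Beta-type integrals
\begin{equation*}
\int_0^\xi s^n(\xi-s)^{k/2}\,\mathrm ds = \xi^{n+1+k/2}\,\frac{\Gamma(n+1)\,\Gamma(1+k/2)}{\Gamma(n+2+k/2)},
\end{equation*}
with $k=1$ for the density and $k=3$ for the pressure.

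Carrying out the resulting expansions I expect
\begin{equation*}
\varrho(\xi) = c_\varrho\,\xi^{n+3/2}\bigl(1+O(\xi)\bigr),\qquad p(\xi) = c_p\,\xi^{n+5/2}\bigl(1+O(\xi)\bigr),
\end{equation*}
where the positive constants $c_\varrho,c_p$ satisfy $c_\varrho/c_p = n+5/2$ after simplifying via $\Gamma(k+1)=k\,\Gamma(k)$. Differentiating and dividing yields $\mathrm d\varrho/\mathrm dp \sim (n+3/2)\,\xi^{-1}$, and since $p/\varrho = O(\xi)$ the prefactor $(\varrho+p)/(\varrho+3p)$ tends to $1$; hence $\kappa(\xi) = (n+3/2)\,\xi^{-1}\bigl(1+O(\xi)\bigr)$. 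One further differentiation combined with $\varrho+p = c_\varrho\,\xi^{n+3/2}\bigl(1+O(\xi)\bigr)$ gives $(\varrho+p)\,\mathrm d\kappa/\mathrm dp = -(n+3/2)\,\xi^{-2}\bigl(1+O(\xi)\bigr)$. Assembling the three terms in (\ref{eq_def_i}) then produces
\begin{equation*}
I(\xi) = \frac{(n+3/2)(n-7/2)}{5}\,\xi^{-2} + O(\xi^{-1}).
\end{equation*}

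For $n\in\{1,2,3\}$ the coefficient $(n+3/2)(n-7/2)/5$ is strictly negative, so $I<0$ on some interval $\xi\in(0,\xi_0]$; via the monotone correspondence between $\xi$ and $p$ this produces a $p_0>0$ with $I(p)\leq 0$ for every $p\in(0,p_0]$, so Theorem~\ref{theo_bs} applies to any fluid model with $\sup p\leq p_0$ and delivers both spherical symmetry and uniqueness. The principal technical difficulty I foresee is that the leading order of $I$ is only $\xi^{-2}$, while the individual contributions $\kappa^2/5$ and $(\varrho+p)\,\mathrm d\kappa/\mathrm dp$ are separately of that size and must cancel down to the announced $(n+3/2)(n-7/2)/5\cdot\xi^{-2}$. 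Two differentiations and two divisions therefore have to be propagated through the expansions of $\varrho$ and $p$ with uniform $O(\xi)$ relative error, which forces one to track the first subleading correction of each Beta-type integral rather than only the leading term and to verify that the displayed cancellation is robust under these corrections rather than accidental.
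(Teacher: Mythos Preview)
Your plan is correct and would yield a valid proof once the error estimates are justified. In fact, the paper itself sketches exactly your polytropic-limit computation in its discussion (Section~5.1), arriving at $I\approx -\frac{5\gamma-6}{5\gamma^2}\tilde K^2 p^{2(1-\gamma)/\gamma}$ with $\gamma=(2n+5)/(2n+3)$, which is equivalent to your leading term $\frac{(n+3/2)(n-7/2)}{5}\,\xi^{-2}$. Two minor remarks: the assumption (\ref{ass_phi}) also permits $n=0$ (the step function example), and your coefficient is negative there as well; and your worry about cancellation is slightly overstated---since the leading $\xi^{-2}$ coefficient of $I$ is \emph{nonzero} for every integer $n\leq 3$, you do not need the first subleading coefficients of $\varrho$ and $p$, only that the relative errors are $O(\xi)$ with bounded $\xi$-derivative. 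The latter follows from the analyticity of $\phi$: expanding each factor in the integrand of (\ref{for_rho})--(\ref{for_p}) and integrating term by term shows that $\varrho(\xi)/\xi^{n+3/2}$ and $p(\xi)/\xi^{n+5/2}$ are smooth at $\xi=0$, so differentiation commutes with the $O(\xi)$ bookkeeping.

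The paper's actual proof in Section~4 takes a structurally different route. Rather than expand asymptotically, it introduces the family
\[
\xi_\kappa(V)=\frac{4\pi}{V}\int_V^{E_0}\phi\!\left(1-\tfrac{E}{E_0}\right)\left(\tfrac{E^2}{V^2}-1\right)^{\kappa}\mathrm dE,\qquad \kappa\in\left\{\tfrac32,\tfrac12,-\tfrac12\right\},
\]
and exploits the exact ladder relations $\xi_{3/2}'=-\frac1V(4\xi_{3/2}+3\xi_{1/2})$, $\xi_{1/2}'=-\frac1V(2\xi_{1/2}+\xi_{-1/2})$ to express $\kappa$ and $\mathrm d\kappa/\mathrm dp$ algebraically in the $\xi_\kappa$. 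The one place where the ladder breaks---the derivative of $\xi_{-1/2}$---is isolated into a remainder $\zeta$, and a separate lemma bounds the single problematic combination $(\xi_{1/2}/\xi_{-1/2})(\zeta/\xi_{-1/2})$ by $4/5$ for $n\leq 3$. This buys exact, non-asymptotic control at the cost of several technical lemmas; your expansion is more transparent about \emph{why} $n\leq 3$ is the threshold but requires the regularity argument above to justify differentiating the $O(\xi)$ remainders.
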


\begin{rem}
It turns out that the condition $n\leq 3$ is necessary if we use ansatz functions of the form (\ref{ass_phi}). If $n\geq 4$, so in particular the choice $\phi(x) = [x^4]_+$ will lead to $I(p) \to \infty$, as $p\to 0$ and the main Theorem of \cite{bs90} cannot be applied, i.e.~it neither can be deduced that the solution is unique nor that it is not unique.
\end{rem}

In the proof of Theorem \ref{main_the} the functions defined in (\ref{for_rho}) and (\ref{for_p}) play an important role. For this reason we first establish some technical lemmas to treat these functions, before we state the proof of Theorem \ref{main_the}. It is convenient to introduce for $\kappa\in \left\{\frac 3 2, \frac 1 2, -\frac 1 2\right\}$ the functions
\begin{equation} \label{eq_def_xi}
\xi_\kappa(V) = \frac{4\pi}{V} \int_V^{E_0} \phi\left(1-\frac{E}{E_0}\right) \left(\frac{E^2}{V^2}-1\right)^\kappa \mathrm dE, \quad V\in (0,E_0].
\end{equation}
Observe that
\begin{equation} \label{eq_rho_p_xi}
\varrho(V) = \xi_{\frac{3}{2}}(V) + \xi_{\frac{1}{2}}(V),\quad\mathrm{and}\quad p(V) = \frac{1}{3} \xi_{\frac{3}{2}}(V),
\end{equation}
if $V\in (0,E_0]$, whereas $\varrho(V) = p(V) = 0$ if $V > E_0$. We collect some facts about the functions $\xi_\kappa$ in the following lemmas.

\begin{lemma} \label{lem_ratio}
Let $\phi\in H^1((-\infty,1];\mathbb R_+)$. Then
\begin{equation} \label{eq_f_c_l}
\frac{\xi_{\kappa+1}(V)}{\xi_\kappa(V)} \to 0,\quad \mathrm{as}\;\, V \to E_0.
\end{equation}
Furthermore, if $\phi$ satisfies (\ref{ass_phi}), there exists $0 < V^* < E_0$ such that for all $V \in [V^*, E_0]$ we have
\begin{equation} \label{eq_s_c_l}
\frac{\xi_{\frac 1 2}(V)}{\xi_{-\frac 12}(V)} \leq \frac{101(E_0+V)(E_0-V)}{100 (3+2n)V^2},
\end{equation}
where $n$ is introduced in (\ref{ass_phi}).
\end{lemma}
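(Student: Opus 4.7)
The plan is to reduce both assertions to a single substitution: set $t = (E-V)/(E_0-V)$ in (\ref{eq_def_xi}) and write $\epsilon := E_0 - V$. A direct computation yields $1 - E/E_0 = \epsilon(1-t)/E_0$ and $E^2/V^2 - 1 = t\epsilon(2V + t\epsilon)/V^2$, so
\begin{equation*}
\xi_\kappa(V) = \frac{4\pi\, \epsilon^{\kappa+1}}{V^{1+2\kappa}} \int_0^1 \phi\!\left(\tfrac{\epsilon(1-t)}{E_0}\right) t^\kappa (2V + t\epsilon)^\kappa \, \mathrm{d}t.
\end{equation*}
This identity isolates the entire dependence on the small parameter $\epsilon$ outside the integral.

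For the first assertion (\ref{eq_f_c_l}) the ratio becomes
\begin{equation*}
\frac{\xi_{\kappa+1}(V)}{\xi_\kappa(V)} = \frac{\epsilon}{V^2} \cdot \frac{\int_0^1 \phi(\cdot)\, t^{\kappa+1}(2V+t\epsilon)^{\kappa+1}\, \mathrm{d}t}{\int_0^1 \phi(\cdot)\, t^\kappa(2V+t\epsilon)^\kappa\, \mathrm{d}t}.
\end{equation*}
The additional factor $t(2V+t\epsilon)$ in the numerator's integrand is pointwise bounded by $V + E_0$ for $t \in [0,1]$, so the integral quotient is at most $V+E_0$, and hence $\xi_{\kappa+1}(V)/\xi_\kappa(V) \leq \epsilon(V+E_0)/V^2 \to 0$ as $V \to E_0$. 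No analyticity is needed; nonnegativity and integrability of $\phi$ suffice.

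For (\ref{eq_s_c_l}) I specialise to $\kappa = -1/2$. Using $(2V + t\epsilon)^{1/2} \leq (V+E_0)^{1/2}$ in the numerator and $(2V + t\epsilon)^{-1/2} \geq (V+E_0)^{-1/2}$ in the denominator (both follow from $2V+t\epsilon \leq V+E_0$), one obtains
\begin{equation*}
\frac{\xi_{1/2}(V)}{\xi_{-1/2}(V)} \leq \frac{\epsilon(V+E_0)}{V^2} \cdot \frac{\int_0^1 \phi(\epsilon(1-t)/E_0)\, t^{1/2}\, \mathrm{d}t}{\int_0^1 \phi(\epsilon(1-t)/E_0)\, t^{-1/2}\, \mathrm{d}t}.
\end{equation*}
At this point the analyticity in (\ref{ass_phi}) enters: writing $\phi(s) = s^n \psi(s)$ with $\psi$ analytic on $[0,1]$ and $\psi(0) = \phi^{(n)}(0)/n! > 0$, a first-order Taylor expansion of $\psi$ shows each integral equals $(\epsilon/E_0)^n [\psi(0)\, B(\kappa + 1, n+1) + O(\epsilon)]$ with $\kappa = \pm 1/2$, where the remainder is uniform in small $\epsilon$ thanks to boundedness of $\psi'$. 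The prefactor cancels in the quotient and the Beta-function identity $B(3/2, n+1)/B(1/2, n+1) = 1/(2n+3)$ leaves a ratio of the form $(1 + O(\epsilon))/(2n+3)$. Choosing $V^*$ close enough to $E_0$ so that the relative error is $\leq 1/100$ on $[V^*, E_0]$ gives the claimed bound.

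The main obstacle is keeping the remainder quantitative enough to deliver precisely the constant $101/100$: one needs a uniform estimate $|\psi(s) - \psi(0)| \leq C\, s$ on $[0, \epsilon/E_0]$ with $C$ independent of $\epsilon$, and then $V^*$ must be chosen so that $C\epsilon/E_0$ is small relative to $\psi(0)\, B(1/2, n+1)$. Analyticity of $\phi$ on $[0,1]$ furnishes such a $C$ via $\sup_{[0,1]}|\psi'|$, making the choice of $V^*$ explicit; everything else is routine bookkeeping.
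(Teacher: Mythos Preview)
Your argument is correct. The overall architecture matches the paper's: you bound the factor coming from $E+V$ by $E_0+V$, change variables to peel off the $\epsilon$-power, expand $\phi$ to leading order, and show the residual integral ratio tends to $1/(2n+3)$. The execution, however, is noticeably cleaner than the paper's. The paper changes variables by $y=E-V$ (interval $[0,\epsilon]$), then forms the \emph{difference} $\xi_{1/2}-X\epsilon\,\xi_{-1/2}$ with $X=\tfrac{101}{100(2n+3)}$, Taylor-expands $\phi$ with Lagrange remainder, and is left with explicit binomial sums whose ratio must be identified as $1/(2n+3)$; this requires a separate combinatorial lemma (the Chu--Vandermonde identity for ${}_2F_1(-n,\tfrac12;\tfrac32;1)$ and ${}_2F_1(-n,\tfrac32;\tfrac52;1)$). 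Your normalised substitution $t=(E-V)/(E_0-V)$ puts the integrals on $[0,1]$ and the factorisation $\phi(s)=s^n\psi(s)$ immediately produces Beta integrals, so the identity $B(\tfrac32,n+1)/B(\tfrac12,n+1)=1/(2n+3)$ replaces the entire auxiliary lemma. What the paper's difference argument buys is that one never divides by a quantity that could vanish; your quotient argument relies on $\psi(0)>0$, which you correctly note follows from~(\ref{ass_phi}). Both routes need analyticity (or at least $C^{n+1}$) only to control the remainder uniformly in $\epsilon$.
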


\begin{proof}
We have
\begin{equation}
\xi_{\kappa+1}(V) = \frac{4\pi}{V} \int_{V}^{E_0} \Phi(E) \left(\frac{E^2}{V^2}-1\right)^{\kappa+1} \mathrm dE \leq \left(\frac{E_0^2}{V^2}-1\right) \xi_\kappa(V).
\end{equation}
So
\begin{equation}
\frac{\xi_{\kappa+1}(V)}{\xi_\kappa(V)} \leq \frac{E_0^2-V^2}{V^2}.
\end{equation}
The right hand side clearly goes to zero as $V\to E_0$ and the first claim (\ref{eq_f_c_l}) of the lemma is shown. \par
For the proof of the second claim (\ref{eq_s_c_l}) we first note
\begin{align}
\xi_{\frac 12}(V) &\leq \frac{4\pi \sqrt{E_0+V}}{V^2} \int_V^{E_0} \phi\left(\frac{E_0-E}{E_0}\right) \sqrt{E-V}\, \mathrm dE, \\
\xi_{-\frac 12}(V) &\geq  \frac{4\pi}{\sqrt{E_0+V}} \int_V^{E_0} \phi\left(\frac{E_0-E}{E_0}\right) \frac{1}{\sqrt{E-V}}\, \mathrm dE,
\end{align}
since $E\leq E_0$. Then we define
\begin{equation} \label{def_epsilon}
\epsilon := E_0-V
\end{equation}
and perform a change of variables in the integrals of $\xi_{\frac 12}$ and $\xi_{-\frac 12}$, given by
\begin{subequations}
\begin{align}
y &= E - V, \quad \Leftrightarrow \quad E = y+V, \\
\mathrm dE &= \mathrm dy, \\
y(V) &= 0, \quad y(E_0) = \epsilon.
\end{align}
\end{subequations}
This yields
\begin{align}
\xi_{\frac 12}(V) &\leq \frac{4\pi \sqrt{E_0+V}}{V^2} \int_0^{\epsilon} \phi\left(\frac{\epsilon-y}{E_0}\right) \sqrt{y} \, \mathrm dy, \\
\xi_{-\frac 12}(V) &\geq  \frac{4\pi}{\sqrt{E_0+V}} \int_0^{\epsilon} \phi\left(\frac{\epsilon-y}{E_0}\right) \frac{1}{\sqrt{y}} \, \mathrm dy. 
\end{align}
We consider
\begin{equation} \label{diff_bc}
\frac{V^2}{4\pi \sqrt{E_0+V}} \xi_{\frac 12}(V) - \frac{\sqrt{E_0+V}}{4\pi} X\epsilon \xi_{-\frac 12}(V) \leq \int_0^{\epsilon} \phi\left(\frac{\epsilon-y}{E_0}\right) \left(\sqrt{y} - X\epsilon \frac{1}{\sqrt{y}}\right) \, \mathrm dy,
\end{equation}
where we later will substitute
\begin{equation} \label{choice_x}
X = \frac{101}{100} \frac{1}{3+2n}.
\end{equation}
Recall that we assume that $\phi$ is analytic on $[0,1]$, and fulfills (\ref{ass_phi}). Then we can write for all $x \in[0,1]$
\begin{equation}
\phi(x) = \frac{\phi^{(n)}(0)}{n!} x^n + \frac{\phi^{(n+1)}(z_x)}{(n+1)!} x^{n+1}, \label{taylor11}
\end{equation}
where $z_x\in[0,1]$ is a number depending on $x$. Now the integral in (\ref{diff_bc}) can be calculated explicitly. This yields
\begin{align}
&\frac{V^2}{4\pi \sqrt{E_0+V}} \xi_{\frac 12}(V) - \frac{\sqrt{E_0+V}}{4\pi} X\epsilon \xi_{-\frac 12}(V) \\
&\leq  \frac{\phi^{(n)}(0)}{n! E_0^n} \epsilon^{n+\frac 32} \left(\sum_{i=0}^n\left(n\atop i\right) \frac{(-1)^i}{i+\frac 32} - X \sum_{j=0}^n\left(n\atop j\right) \frac{(-1)^j}{j+\frac 12}\right) \\
&\quad+ \frac{\left\|\phi^{(n+1)}\right\|_{L^\infty([0,1])}}{(n+1)! E_0^{n+1}} \epsilon^{n+\frac 52} \left| \sum_{k=0}^{n+1} \left(n+1\atop k\right) \frac{(-1)^k}{k+\frac 32} - X \sum_{\ell=0}^{n+1} \left({n+1} \atop \ell\right) \frac{(-1)^\ell}{\ell+\frac 12}\right|. \nonumber
\end{align}
We observe that the first term will be dominating for $\epsilon$ sufficiently small. By Lemma \ref{lem_help} below and the choice (\ref{choice_x}) for $X$ the first term is negative. Thus
\begin{align}
&\frac{V^2}{4\pi \sqrt{E_0+V}} \xi_{\frac 12}(V) - \frac{\sqrt{E_0+V}}{4\pi} X\epsilon \xi_{-\frac 12}(V) \leq 0 \\
\Leftrightarrow \quad  &\frac{ \xi_{\frac 12}(V)}{ \xi_{-\frac 12}(V)} \leq \frac{101(E_0+V)\epsilon}{100(3+2n)V^2}
\end{align}
and the second claim (\ref{eq_s_c_l}) is established. 
\end{proof}

\begin{lemma} \label{lem_help}
Let $n \geq 0$. Then we have
\begin{equation}
\sum_{i=0}^n \left(n \atop i\right) \frac{(-1)^i}{i+\frac 32} = \frac{1}{2n+3} \sum_{j=0}^n \left(n \atop j\right) \frac{(-1)^j}{j+\frac 12}.
\end{equation}
\end{lemma}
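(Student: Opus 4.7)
My plan is to prove the identity by converting both sums into Beta integrals and then comparing them via a single Gamma-function identity. The starting observation is that for any $k \geq 0$ one has
\begin{equation*}
\frac{1}{k+\tfrac 32} = \int_0^1 t^{k+\tfrac 12}\, \mathrm dt, \qquad \frac{1}{k+\tfrac 12} = \int_0^1 t^{k-\tfrac 12}\, \mathrm dt,
\end{equation*}
where the second integral is convergent at $0$. Substituting these into the two sums in Lemma \ref{lem_help} and interchanging the finite sum with the integral, the binomial theorem collapses $\sum_{i=0}^n \binom{n}{i}(-t)^i$ to $(1-t)^n$, so that the left-hand side becomes $\int_0^1 t^{1/2}(1-t)^n\, \mathrm dt = B(\tfrac 32, n+1)$ and the right-hand side becomes $\frac{1}{2n+3}\int_0^1 t^{-1/2}(1-t)^n\, \mathrm dt = \frac{1}{2n+3} B(\tfrac 12, n+1)$.

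The identity is therefore equivalent to the ratio statement
\begin{equation*}
\frac{B(\tfrac 32, n+1)}{B(\tfrac 12, n+1)} = \frac{1}{2n+3}.
\end{equation*}
Writing both Beta functions in terms of Gamma functions via $B(a,b) = \Gamma(a)\Gamma(b)/\Gamma(a+b)$, the factors $\Gamma(n+1)$ cancel, and what remains is
\begin{equation*}
\frac{\Gamma(\tfrac 32)}{\Gamma(\tfrac 12)} \cdot \frac{\Gamma(n+\tfrac 32)}{\Gamma(n+\tfrac 52)} = \tfrac 12 \cdot \frac{1}{n+\tfrac 32} = \frac{1}{2n+3},
\end{equation*}
using only the functional equation $\Gamma(a+1) = a\Gamma(a)$ applied once at $a = \tfrac 12$ and once at $a = n+\tfrac 32$. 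This closes the argument.

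There is essentially no obstacle to overcome once the integral representation is in place; the only mild subtlety is the convergence of $\int_0^1 t^{-1/2}(1-t)^n\, \mathrm dt$ at the endpoint $t=0$, which is harmless since the integrand is integrable there. As an alternative, one could proceed by induction on $n$ using Pascal's rule $\binom{n}{i} = \binom{n-1}{i-1} + \binom{n-1}{i}$ to write each sum as a combination of two sums of the same shape at level $n-1$, but the Beta-integral route is shorter and avoids bookkeeping of index shifts.
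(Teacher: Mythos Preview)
Your proof is correct. The approach differs in presentation from the paper's, though the underlying mathematics is closely related. The paper identifies each sum as a terminating hypergeometric series ${}_2F_1(-n,\tfrac32;\tfrac52;1)$ and ${}_2F_1(-n,\tfrac12;\tfrac32;1)$ and then invokes the Chu--Vandermonde identity to evaluate both in closed form as ratios of Pochhammer symbols; the claimed equality drops out by comparing the two values. Your route via $\tfrac{1}{k+a}=\int_0^1 t^{k+a-1}\,\mathrm dt$ and the binomial theorem lands directly on the Beta integrals $B(\tfrac32,n+1)$ and $B(\tfrac12,n+1)$, and the final step $\Gamma(a+1)=a\Gamma(a)$ is exactly what Chu--Vandermonde reduces to in this special case. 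What your argument buys is self-containment: a reader need not recognise the hypergeometric notation or recall the Chu--Vandermonde formula, only the Euler Beta integral and the Gamma functional equation. What the paper's version buys is brevity for readers fluent in special functions, and it makes transparent that the identity is an instance of a standard summation theorem rather than an ad hoc trick.
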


\begin{proof}
We notice that the sums are given by the hypergeometric functions,
\begin{align}
_2F_1\left(-n,\frac 32;\frac 52; z\right) &= \frac 32 \sum_{k=0}^n \left(n \atop k\right) \frac{(-1)^k}{k + \frac 32} z^k, \\
_2F_1\left(-n,\frac 12;\frac 32; z\right) &= \frac 12 \sum_{k=0}^n \left(n \atop k\right) \frac{(-1)^k}{k + \frac 12} z^k,
\end{align}
evaluated at $z=1$. For the values of these functions we have by the
 Chu-Vandermonde identity
\begin{align}
_2F_1\left(-n,\frac 32;\frac 52; 1\right) &= \frac{n!}{\frac 52 \cdots \frac{2n+1}{2} \frac{2n+3}{2}}, \\
_2F_1\left(-n,\frac 12;\frac 32; 1\right) &= \frac{n!}{\frac 32 \frac 52 \cdots \frac{2n+1}{2}}.
\end{align}
The assertion now follows.
\end{proof}

\begin{lemma} \label{lem_xi}
Let $E_0>0$ and $\Phi\in H^1([0,E_0];\mathbb R)$. Then the function $\xi_\kappa$, defined in (\ref{eq_def_xi}), is continuously differentiable for  $\kappa\in \left\{\frac 3 2, \frac 1 2\right\}$ and we have
\begin{align}
\xi_{\frac 32}'(V) &= -\frac 1 V \left(4 \xi_{\frac 32}(V) + 3\xi_{\frac 12}(V)\right), \label{xi1} \\
\xi_{\frac 12}'(V) &= -\frac 1 V \left(2 \xi_{\frac 12}(V) + \xi_{-\frac 12}(V) \right). \label{xi2}
\end{align}
\end{lemma}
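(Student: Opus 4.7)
The plan is to isolate the explicit power of $V$ by writing
\[ \xi_\kappa(V) = \frac{4\pi}{V^{2\kappa+1}}\,H_\kappa(V), \qquad H_\kappa(V) := \int_V^{E_0} \phi\!\left(1-\tfrac{E}{E_0}\right)(E^2-V^2)^\kappa\,\mathrm dE, \]
and then compute $\xi_\kappa'(V)$ via the product rule. Differentiating the explicit factor $V^{-(2\kappa+1)}$ contributes $-((2\kappa+1)/V)\xi_\kappa(V)$, so the task reduces to computing $H_\kappa'(V)$.

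Applying the Leibniz rule to $H_\kappa$, the boundary contribution at the lower endpoint is $-\phi(1-V/E_0)(V^2-V^2)^\kappa$, which vanishes because $\kappa>0$, while the interior contribution is
\[ \int_V^{E_0} \phi\!\left(1-\tfrac{E}{E_0}\right)\,\partial_V(E^2-V^2)^\kappa\,\mathrm dE = -2\kappa V\int_V^{E_0}\phi\!\left(1-\tfrac{E}{E_0}\right)(E^2-V^2)^{\kappa-1}\,\mathrm dE. \]
The remaining integral equals $V^{2\kappa-1}(4\pi)^{-1}\xi_{\kappa-1}(V)$, so $H_\kappa'(V) = -(\kappa V^{2\kappa}/(2\pi))\,\xi_{\kappa-1}(V)$. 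Reassembling yields the unified identity
\[ \xi_\kappa'(V) = -\frac{1}{V}\bigl[(2\kappa+1)\xi_\kappa(V) + 2\kappa\,\xi_{\kappa-1}(V)\bigr], \]
which at $\kappa=\tfrac32$ reproduces (\ref{xi1}) and at $\kappa=\tfrac12$ reproduces (\ref{xi2}).

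The main obstacle is justifying the interchange of derivative and integral in the Leibniz step when $\kappa=\tfrac12$, because the formal pointwise $V$-derivative $-V\phi(1-E/E_0)(E^2-V^2)^{-1/2}$ blows up like $(E-V)^{-1/2}$ at the moving endpoint. I would handle this by a regularisation: set $H_{1/2}^\epsilon(V):=\int_{V+\epsilon}^{E_0}\phi(1-E/E_0)(E^2-V^2)^{1/2}\,\mathrm dE$. For $\epsilon>0$ the integrand and its $V$-derivative are continuous on the integration region, so classical Leibniz applies to $H_{1/2}^\epsilon$. Using $\Phi\in H^1([0,E_0])\hookrightarrow L^\infty$ together with the elementary bound $(E^2-V^2)^{-1/2}\le(2V(E-V))^{-1/2}$, one then checks that both $H_{1/2}^\epsilon\to H_{1/2}$ and $(H_{1/2}^\epsilon)'\to-V\int_V^{E_0}\phi(1-E/E_0)(E^2-V^2)^{-1/2}\,\mathrm dE$ uniformly on every compact subinterval of $(0,E_0)$, with error $O(\sqrt{\epsilon})$ in each case. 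This upgrades the formal Leibniz calculation to genuine $C^1$-differentiability of $H_{1/2}$, and hence of $\xi_{1/2}$. Continuity of $\xi_\kappa'$ then follows from continuity of $\xi_\kappa$ and $\xi_{\kappa-1}$ in $V$, which is a simpler parallel dominated-convergence argument.
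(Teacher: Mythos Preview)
Your argument is correct and follows the same underlying mechanism as the paper's proof: the boundary contribution at $E=V$ vanishes because $\kappa>0$, so the derivative is obtained by differentiating the integrand, which lowers $\kappa$ by one. The paper carries this out directly on the difference quotient of $\int_V^{E_0}\Phi(E)(E^2/V^2-1)^\kappa\,\mathrm dE$, splitting it into a small-interval piece $\frac{1}{\Delta}\int_{V-\Delta}^V(\cdots)$ (which is $O(\Delta^\kappa)$) and an interior piece, and then differentiates under the integral sign without further comment.

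Your organisation differs in two mildly useful ways. First, by extracting the factor $V^{-(2\kappa+1)}$ you obtain the clean unified identity $\xi_\kappa' = -V^{-1}\bigl[(2\kappa+1)\xi_\kappa + 2\kappa\,\xi_{\kappa-1}\bigr]$, from which both (\ref{xi1}) and (\ref{xi2}) drop out by substitution; the paper leaves the reader to verify each case. Second, your $\epsilon$-regularisation with the explicit $O(\sqrt{\epsilon})$ bounds (via $(E^2-V^2)^{-1/2}\le(2V(E-V))^{-1/2}$ and $\Phi\in L^\infty$) makes the passage of the limit through the integral fully rigorous in the delicate case $\kappa=\tfrac12$, whereas the paper's proof is terse on precisely this point. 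Substantively, though, the two arguments are the same.
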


\begin{proof}
Let $\Delta>0$ small. We consider
\begin{align}
&\frac{1}{\Delta} \left[ \int_{V - \Delta}^{E_0} \Phi(E) \left( \frac{E^2}{(V-\Delta)^2} - 1\right)^\kappa \mathrm dE - \int_V^{E_0} \Phi(E) \left( \frac{E^2}{V^2} - 1 \right)^\kappa \mathrm dE \right] \label{eq_ints}\\
&= \frac{1}{\Delta} \int_{V - \Delta}^{V} \Phi(E) \left( \frac{E^2}{(V - \Delta)^2} - 1 \right)^\kappa \mathrm dE \nonumber \\
&\qquad + \int_V^{E_0} \Phi(E) \frac{1}{\Delta} \left[ \left( \frac{E^2}{(V-\Delta)^2} - 1\right)^\kappa - \left(\frac{E^2}{V^2} - 1\right)^\kappa \right]\mathrm dE. \nonumber
\end{align}
If $\kappa >0$ then the first integral on the right hand side of (\ref{eq_ints}) goes to zero, as $\Delta\to 0$. So the derivative of the integral can be obtained by merely differentiating the integrand with respect to $V$.
\end{proof}

The arguments in the proof of Lemma \ref{lem_xi} cannot be applied to $\xi_{-\frac 12}$ since the first summand in (\ref{eq_ints}) does not converge to $0$ as $\Delta\to 0$ for $\kappa =-\frac 12$. Thus the analysis of the derivative of $\xi_{-\frac 12}$ requires a different approach. The derivative $\frac{\mathrm d}{\mathrm dV} \xi_{-\frac 12}(V)$ consists in two parts,
\begin{equation} \label{for_dxim12}
\frac{\mathrm d}{\mathrm dV} \xi_{-\frac 12}(V) = -\frac{1}{V} \left(\xi_{-\frac 12}(V) + \zeta(V)\right),
\end{equation}
where
\begin{equation} \label{formula_zeta}
\zeta(V) := -4\pi \frac{\mathrm d}{\mathrm dV} \left(\int_V^{E_0} \Phi(E) \left(\frac{E^2}{V^2}-1\right)^{-\frac 12}\, \mathrm dE \right).
\end{equation}
We proof the following lemma.

\begin{lemma} \label{lem_zeta}
If $n\leq 3$ ($n$ is introduced in the assumption (\ref{ass_phi}) on the ansatz function $\phi$), then for $\epsilon > 0$, (defined in (\ref{def_epsilon})) sufficiently small, we have
\begin{equation}
\left| \frac{\xi_{\frac 12}}{\xi_{-\frac 12}} \frac{\zeta}{\xi_{-\frac 12}}\right| < \frac 45.
\end{equation}
\end{lemma}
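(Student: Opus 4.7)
The plan is to pin down the leading-order behavior of each of $\xi_{1/2}(V)$, $\xi_{-1/2}(V)$ and $\zeta(V)$ as $\epsilon = E_0-V \to 0^+$, and then read off the limit of the combined ratio. First I would rewrite $\zeta$ in a form free of the boundary singularity at $E=V$, since (\ref{formula_zeta}) differentiates an integral whose integrand blows up at the lower limit so the derivative cannot be taken under the integral. The natural fix is the substitution $E = V\cosh\theta$, under which
\begin{equation*}
\int_V^{E_0} \Phi(E)\left(\frac{E^2}{V^2}-1\right)^{-1/2} dE \;=\; V \int_0^{\theta_0(V)} \Phi(V\cosh\theta)\, d\theta,\qquad \theta_0(V) = \mathrm{arccosh}(E_0/V),
\end{equation*}
is a regular integral. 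Assumption (\ref{ass_phi}) forces $\Phi(E_0) = \phi(0) = 0$, so the endpoint contribution from $\theta_0'(V)$ vanishes upon differentiating, and reverting to the $E$-variable yields the clean representation
\begin{equation*}
\zeta(V) \;=\; -4\pi \int_V^{E_0} \frac{(E\Phi(E))'}{\sqrt{E^2 - V^2}}\, dE.
\end{equation*}

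Next I would substitute the Taylor expansion (\ref{taylor11}) for $\phi$ and change variables $y = E - V$, $E^2 - V^2 = y(2V+y)$, in all three integrals. The principal term of $\phi$ is $\phi^{(n)}(0)x^n/n!$, and the Beta integral $\int_0^\epsilon (\epsilon - y)^n y^{\kappa}\, dy = \epsilon^{n+\kappa+1} B(n+1,\kappa+1)$ gives
\begin{equation*}
\xi_{-1/2}(V) \sim \frac{4\pi\,\phi^{(n)}(0)}{n!\,E_0^n\sqrt{2V}}\,\epsilon^{n+1/2}\,B(n+1,\tfrac12),
\end{equation*}
and likewise $\xi_{1/2}(V) \sim \mathrm{const}\cdot \epsilon^{n+3/2}\,B(n+1,3/2)$ and, from the formula above for $\zeta$ with $(E\Phi)' \sim -\phi^{(n)}(0)(E_0-E)^{n-1}/((n-1)!\,E_0^{n-1})$, $\zeta(V) \sim \mathrm{const}\cdot \epsilon^{n-1/2}\,B(n,\tfrac12)$. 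The higher Taylor remainder contributes one extra factor of $\epsilon$ and is absorbed as an error.

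Assembling these into the product and simplifying via $\Gamma(n+3/2) = (n+1/2)\Gamma(n+1/2)$, $\Gamma(n+5/2) = (n+3/2)(n+1/2)\Gamma(n+1/2)$, the powers $\epsilon^{2n+1}$ cancel exactly and a straightforward Beta-function computation yields
\begin{equation*}
\lim_{\epsilon \to 0^+}\;\frac{\xi_{1/2}(V)\,\zeta(V)}{\xi_{-1/2}(V)^{2}} \;=\; \frac{2n+1}{2n+3}.
\end{equation*}
All three functions are positive to leading order so the absolute value is immaterial in the limit. For $n\in\{1,2,3\}$ the limit equals $3/5$, $5/7$, $7/9$, each strictly less than $4/5$, and by continuity there is $\epsilon_* > 0$ such that the asserted bound holds for $0<\epsilon<\epsilon_*$.

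The main obstacle is step one: replacing the singular derivative (\ref{formula_zeta}) by the regular integral expression. Once $\zeta$ is put in that form, everything reduces to bookkeeping with the Taylor expansion (\ref{taylor11}) already used in the proof of Lemma \ref{lem_ratio} and with the elementary Beta-integral identities, exactly parallel to the arguments already carried out there. The restriction $n\le 3$ enters only in the final comparison $(2n+1)/(2n+3) < 4/5$, which is sharp: for $n=4$ the limit is $9/11 > 4/5$, matching the Remark following Theorem \ref{main_the}.
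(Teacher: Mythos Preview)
Your argument is correct for $n\in\{1,2,3\}$ and takes a genuinely different route from the paper's. The paper removes the endpoint singularity by the substitution $x=(E-V)/(E_0-V)$, then splits $\zeta=\zeta_1+\zeta_2$ into a piece carrying $\phi'$ and a piece carrying $\phi$; each is bounded separately against $\xi_{-1/2}$ with deliberately crude constants ($101/100$, $103/102$), and these are combined with the estimate (\ref{eq_s_c_l}) from Lemma~\ref{lem_ratio} to arrive at $\frac{102+206n}{100(3+2n)}\le\frac45$. You instead use $E=V\cosh\theta$ to obtain the single representation $\zeta(V)=-4\pi\int_V^{E_0}(E\Phi(E))'/\sqrt{E^2-V^2}\,dE$ and then compute the exact leading asymptotics of all three quantities via Beta integrals, obtaining $\lim_{\epsilon\to0}\xi_{1/2}\zeta/\xi_{-1/2}^2=(2n+1)/(2n+3)$. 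Your approach is shorter, yields the sharp limiting constant, and makes transparent why the threshold is precisely $n\le 3$; the paper's approach avoids computing limits and works entirely with inequalities, at the price of constants tuned so that the final bound barely closes at $n=3$.

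One small caveat: your claim that assumption (\ref{ass_phi}) forces $\phi(0)=0$ is the interpretation the paper actually uses (cf.\ the Taylor formulas (\ref{taylor11}), (\ref{taylor1})), but it holds only for $n\ge 1$. For $n=0$ (the step-function example) one has $\phi(0)>0$ and your boundary term $V\Phi(E_0)\theta_0'(V)$ does not vanish; in that case it is the boundary contribution $4\pi E_0\phi(0)/\sqrt{E_0^2-V^2}$ that furnishes the leading order of $\zeta$, and a quick check shows the same limit $(2n+1)/(2n+3)=1/3<4/5$ still emerges. The paper's own proof has the analogous blind spot at $n=0$, since (\ref{taylor2}) involves $(n-1)!$.
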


\begin{proof}
The first factor is already treated in Lemma \ref{lem_ratio}. So we focus on the second factor. First we calculate $\zeta(V)$. To this end we perform a change of variables in the integral (\ref{formula_zeta}), given by
\begin{subequations}
\begin{align}
x &= \frac{E-V}{E_0-V}, \quad \Leftrightarrow \quad E = x(E_0-V) + V, \\
\mathrm dE &= (E_0-V)\, \mathrm dx, \\
x(V) &= 0, \quad x(E_0) = 1.
\end{align}
\end{subequations}
This yields 
\begin{equation}
\zeta(V) = -4\pi \frac{\mathrm d}{\mathrm dV} \left((E_0-V) \int_0^1 \phi\left(\frac{(E_0-V)(1-x)}{E_0}\right) \left(\left(x\left(\frac{E_0}{V}-1\right) + 1\right)^2-1\right)^{-\frac 12}\, \mathrm d x \right).
\end{equation}
A straight forward calculation yields
\begin{equation}
\zeta(V) = \zeta_1(V) + \zeta_2(V),
\end{equation}
where
\begin{align}
\zeta_1(V) &= 4\pi \int_V^{E_0} \phi'\left(1-\frac{E}{E_0}\right) \frac{E_0-E}{E_0(E_0-V)}\left(\frac{E^2}{V^2}-1\right)^{-\frac{1}{2}} \mathrm dE, \\
\zeta_2(V) &= -\frac{4\pi}{\epsilon} \int_V^{E_0} \phi\left(1-\frac{E}{E_0}\right) \left(\frac{E^2}{V^2}-1\right)^{-\frac 1 2} \left(\frac{EE_0}{(E+V)V}-1\right) \mathrm dE. \label{eq_f_zeta2}
\end{align}
Consider the last term in the integral of $\zeta_2$. We notice that 
\begin{equation}
\frac{EE_0}{(E+V)V}-1 = -\frac 12 + \Gamma(\epsilon),
\end{equation}
where $\epsilon = E_0-V$ and $\Gamma(\epsilon)$ is a positive continuous function that satisfies $\Gamma(\epsilon)\to 0$, as $\epsilon \to 0$. So we can write for $\epsilon$ sufficiently small
\begin{equation} \label{est_zeta2}
\left| \zeta_2(V) \right| \leq \frac{V}{2\epsilon} \xi_{-\frac 12}(V).
\end{equation}
Next we consider $\zeta_1(V)$. We perform a change of variables, given by
\begin{subequations}
\begin{align}
\alpha &= \frac{E_0-E}{E_0}, \quad \Leftrightarrow \quad E = E_0(1-\alpha), \\
\mathrm dE &= -E_0\mathrm d\alpha, \\
\alpha(V) &= \frac{\epsilon}{E_0}, \quad \alpha(E_0)=0.
\end{align}
\end{subequations}
This yields
\begin{equation}
\zeta_1(V) =  4\pi\frac{VE_0}{\epsilon} \int_0^{\epsilon/E_0}  \phi'\left(\alpha\right) \alpha \left(E_0^2(1-\alpha)^2 - V^2\right)^{-\frac 12} \mathrm d\alpha
\end{equation}
and
\begin{equation}
\xi_{-\frac 12}(V) = 4\pi E_0 \int_0^{\epsilon/E_0} \phi(\alpha) \left(E_0^2(1-\alpha)^2 - V^2\right)^{-\frac 12} \mathrm d\alpha.
\end{equation}
Recall that we assume that $\phi$ is analytic on $[0,1]$ and fulfills (\ref{ass_phi}) for $n\geq 0$. This means $n$ is the lowest number such that $\phi^{(n)}(0) \neq 0$ ($\phi^{(n)}$ denotes the $n$-th derivative). Then we can write for all $\alpha\in[0,1]$
\begin{align}
\phi(\alpha) &= \frac{\phi^{(n)}(0)}{n!} \alpha^n + \frac{\phi^{(n+1)}(x_\alpha^1)}{(n+1)!}\alpha^{n+1}, \label{taylor1} \\
\phi'(\alpha) &= \frac{\phi^{(n)}(0)}{(n-1)!} \alpha^{n-1} +
 \frac{\phi^{(n+1)}(x_\alpha^2)}{n!}{\alpha^{n}}. \label{taylor2}
\end{align}
where $x_\alpha^1, x_\alpha^2 \in [0,1]$ in the remainder terms depend on $\alpha$.
We have 
\begin{multline}
\frac{\epsilon}{4\pi VE_0} \zeta_1(V) - \frac{1}{4\pi E_0} \frac{103 n}{102} \xi_{-\frac 12}(V) \\
= \int_0^{\epsilon/E_0} \left(\phi'(\alpha)\alpha - \frac{103n}{102} \phi(\alpha)\right) \left(E_0^2(1-\alpha)^2 - V^2\right)^{-\frac 12} \mathrm d\alpha.
\end{multline}
We consider $(\phi'(\alpha)\alpha - \frac{103n}{102} \phi(\alpha))$ separately, using (\ref{taylor1})--(\ref{taylor2}). We have for some $x_\alpha^3\in [0,1]$
\begin{align}
\phi'(\alpha)\alpha - \frac{103n}{102} \phi(\alpha) &= -\frac{\alpha^n}{102} \left(\frac{\phi^{(n)}(0)}{(n-1)!} + \alpha \frac{\phi^{(n+1)}(x_\alpha^3)}{n!} \right) \\
&\leq \frac{\alpha^n}{102} \left(\epsilon \frac{\left\|\phi^{(n+1)}\right\|_{L^\infty([0,1])}}{n!}- \frac{\phi^{(n)}(0)}{(n-1)!}\right).
\end{align}
This is negative for $\epsilon$ sufficiently small. We deduce that for $\epsilon$ sufficiently small
\begin{equation} \label{est_zeta1}
\frac{\epsilon}{4\pi VE_0} \zeta_1(V) - \frac{1}{4\pi E_0} \frac{103n}{102} \xi_{-\frac 12}(V) < 0 \quad \Leftrightarrow \quad  \frac{\zeta_1(V)}{\xi_{-\frac 12}(V)} < \frac{103 nV}{102 \epsilon} .
\end{equation}
Combining (\ref{eq_s_c_l}) from Lemma \ref{lem_ratio}, (\ref{est_zeta2}) and (\ref{est_zeta1}) we obtain
\begin{equation}
\left| \frac{\xi_{\frac 12}}{\xi_{-\frac 12}} \frac{\zeta}{\xi_{-\frac 12}}\right| \leq \frac{101 (E_0+V) \epsilon}{100(3+2n)V^2} \left(\frac{V}{2\epsilon} + \frac{103 nV}{102\epsilon} \right).
\end{equation}
We have $E_0+V = 2V + \epsilon$ and if $\epsilon$ is small enough, we can write $E_0+V < \frac{102}{101} 2V$. Then
\begin{equation}
\left| \frac{\xi_{\frac 12}}{\xi_{-\frac 12}} \frac{\zeta}{\xi_{-\frac 12}}\right| < \frac{102 + 206 n}{100(3+2n)} \leq \frac 45.
\end{equation}
The last step is obtained by substituting $n=3$.
\end{proof}

\begin{proof}[Proof of Theorem \ref{main_the}]
The proof of Theorem \ref{main_the} is an application of Theorem \ref{theo_bs}, the main theorem in \cite{bs90}. We show that a fluid model coming from an isotropic static solution of the Einstein-Vlasov system meets the assumptions of the main theorem in \cite{bs90}. In particular we will show that there exists $p_0>0$ such that $I(p)\leq 0$ for all $p\leq p_0$. \par
We use (\ref{eq_rho_p_xi}) and (\ref{xi1})--(\ref{xi2}) to calculate
\begin{align}
\frac{\mathrm d\varrho}{\mathrm d p} &= \frac{\varrho'}{p'} = 3\left(1 + \frac{2\xi_{\frac 1 2}+\xi_{-\frac 1 2}}{4\xi_{\frac 3 2} + 3\xi_{\frac 1 2}}\right), \\
\kappa &= \frac{4 \xi_{\frac 3 2} + 5 \xi_{\frac 1 2} + \xi_{-\frac 1 2}}{2 \xi_{\frac 3 2} + \xi_{\frac 1 2}}. \label{for_kappa}
\end{align}
A prime denotes the derivative with respect to $V$. A straight forward calculation using Lemma \ref{lem_xi} then yields
\begin{multline}
\frac{\mathrm d\kappa}{\mathrm d p} = \frac{\kappa'}{p'} = \frac{-3}{\left(2\xi_{\frac 3 2} + \xi_{\frac 1 2}\right)^2 \left(4\xi_{\frac 3 2} + 3 \xi_{\frac 1 2}\right)} \\ 
\times \left(12 \xi_{\frac 3 2} \xi_{\frac 1 2} + 18 \xi_{\frac 1 2}^2 + 2\xi_{\frac32}\xi_{-\frac12} + 8 \xi_{\frac 1 2} \xi_{-\frac 1 2} + \xi_{-\frac 1 2}^2+ \left(2\xi_{\frac 3 2} + \xi_{\frac 1 2}\right) V\xi_{-\frac 12}'\right).
\end{multline} 
Using the formula (\ref{for_dxim12}) for $\xi_{-\frac 12}'$ this becomes
\begin{equation}
\frac{\mathrm d\kappa}{\mathrm d p} = \frac{\kappa'}{p'} = \frac{-3}{\left(2\xi_{\frac 3 2} + \xi_{\frac 1 2}\right)^2 \left(4\xi_{\frac 3 2} + 3 \xi_{\frac 1 2}\right)} \left(12 \xi_{\frac 3 2} \xi_{\frac 1 2} + 18 \xi_{\frac 1 2}^2 + 7 \xi_{\frac 1 2} \xi_{-\frac 1 2} + \xi_{-\frac 1 2}^2 - \left(2\xi_{\frac 3 2} + \xi_{\frac 1 2}\right) \zeta \right).
\end{equation} 
Again by virtue of Lemma \ref{lem_ratio}, and formula (\ref{for_kappa}), there exists a function $\Gamma = \Gamma(V)$ which goes to zero as $V\to E_0$ such that we can write
\begin{align}
\frac 1 5 \kappa^2 &= \frac 1 5 \left(2\xi_{\frac 32} + \xi_{\frac 12}\right)^{-2} \left( \left(10+\Gamma \right) \xi_{\frac 1 2} \xi_{-\frac 1 2} + \xi_{-\frac 1 2}^2 \right) \label{eq_i_term2} \\
2\kappa &=  2 \left(2 \xi_{\frac 32} + \xi_{\frac 12}\right)^{-2} \left[\left(2\xi_{\frac 32} + \xi_{\frac 12}\right) \left(4 \xi_{\frac 3 2} + 5 \xi_{\frac 1 2} + \xi_{-\frac 1 2}\right)\right]  \label{eq_i_term3}\\
&= 2 \left(2\xi_{\frac 32} + \xi_{\frac 12}\right)^{-2} (1+\Gamma) \xi_{\frac 12} \xi_{-\frac{1}{2}} \nonumber
\end{align}
for $V$ close to $E_0$. Inserting (\ref{eq_i_term2}) and (\ref{eq_i_term3}) into the formula (\ref{eq_def_i}) for $I$ we derive
\begin{equation} 
I = -\left(2\xi_{\frac 32} + \xi_{\frac 12}\right)^{-2} \left((12 \xi_{\frac 32} \xi_{\frac 12} + 18 \xi_{\frac 12}^2 + (3-\Gamma) \xi_{\frac 12}\xi_{-\frac 12}+\frac 4 5 \xi_{-\frac 12}^2 - \left(2\xi_{\frac 3 2} + \xi_{\frac 1 2}\right) \zeta \right), \label{for_i_cancel}
\end{equation}
where $\Gamma$ is a different function than that in (\ref{eq_i_term2}) and (\ref{eq_i_term3}), but still with the property that $\Gamma \to 0$, as $V\to E_0$. Now, by Lemma \ref{lem_zeta}, and since $2\xi_{\frac 32}$ can be neglected compared to $\xi_{\frac 12}$ for $V$ close to $E_0$, the combination
\begin{equation}
 \frac 4 5 \xi_{-\frac 12}^2 - \left(2\xi_{\frac 3 2} + \xi_{\frac 1 2}\right) \zeta
\end{equation}
is positive for $V$ sufficiently close to $E_0$. Thus we can write
\begin{equation} \label{last_eq_i}
I \leq -\left(2\xi_{\frac 32} + \xi_{\frac 12}\right)^{-2} \left((12 \xi_{\frac 32} \xi_{\frac 12} + 18 \xi_{\frac 12}^2 + (3-\Gamma) \xi_{\frac 12}\xi_{-\frac 12}\right).
\end{equation}
By virtue of (\ref{eq_f_c_l}) in Lemma \ref{lem_ratio} we see that the last term, 
\begin{equation}
-\frac{(3-\Gamma) \xi_{\frac 12}\xi_{-\frac 12}}{\left(2\xi_{\frac 32} + \xi_{\frac 12}\right)^{2}} \to -\infty,\quad\mathrm{as}\;\, V\to E_0.
\end{equation}
Since all terms in (\ref{last_eq_i}) are negative, we have obtianed $I(V) \to -\infty$, as $V \to E_0$. Thus, by continuity there exists $V_0$ such that $I(V)\leq 0$ for all $V\geq V_0$. We set $p_0:=p(V_0)$.
\end{proof}

\section{Discussion}

\subsection{Limits of the Einstein-Vlasov system}

In the last section we showed that if a static, isotropic solution of the Einstein-Vlasov system has not too high pressure, it is the unique spherically symmetric solution to the prescribed ansatz function $\phi$ and surface potential $E_0$. \par
Now we consider different limits to get some insights when Theorem \ref{theo_bs} can be applied and in what situations the assumptions are not met. Fist we consider situations of high pressure, i.e.~the relativistic limit for massive particles, where $v^{2}\gg m^2$. We find with (\ref{def_rho}), (\ref{def_p})
\begin{equation}
\rho \simeq  4\pi \int_{0}^{\infty}v^{3}F(x,v) \mathrm dv, \qquad p \simeq \frac{1}{3}\rho.
\end{equation}
Hence the equation of state for radiation fluid, $\varrho(p) = 3p$, is recovered. For this equation of state $I(p)$ can easily be calculated. One obtains $I(p) = \frac{24}{5} > 0$ and Theorem \ref{theo_bs} cannot be applied. \par
In the non-relativistic limit for massive particles, where $v^{2}\ll m^2$, we find 
\begin{equation}
 \rho \simeq Nm= 4\pi \int_{0}^{\infty} v^{2}F(x,v) \mathrm dv, \qquad p\simeq \frac{1}{3}m\langle v^{2} \rangle N, 
\end{equation}
where 
\begin{equation}
 \langle v^{2} \rangle :=\frac{\int_{0}^{\infty}v^{2} \mathrm dv \frac{\mathrm dN}{\mathrm dv}}{\int_{0}^{\infty} \frac{\mathrm dN}{\mathrm dv} \mathrm dv}
\end{equation}
and $N$ is defined by $N^\mu =N u^\mu$ where $N^\mu$ is the particle number current defined in (\ref{def_n}) and $u^\mu$ is the four velocity of the fluid, given in (\ref{eq_u}). Thus, the equation of state for non-relativistic ideal gas, $\varrho(p) = \frac{3}{\langle v^2\rangle} p$, is  recovered. For this equation of state, with $\langle v^{2}\rangle$ constant, corresponding to isothermal gas, we obtain $I>0$, independent of $p$, as well.\par
For the massless case, where $m=0$, we can explicitly see
\begin{equation}
\rho = 4\pi \int_{0}^{\infty} v^{3}F(x,v) \mathrm dv, \qquad p=  \frac{1}{3}\rho,
\end{equation}
and hence $p=\rho/3$ is recovered. We should note that thermal equilibrium is not necessary for $p=\rho/3$. In other words, if we assume isotropy in the momentum space in the Vlasov system of massless particles, it necessarily reduces to the perfect fluid system with the equation of state $p=\rho/3$. We have already seen that for this equation of state one always has $I>0$, independently of $\phi$ and $E_0$. \par
Here, it is instructive to derive the equation of state in the low-pressure regime for the massive case in the context of Theorem~\ref{main_the}. For simplicity, in reference to (\ref{ass_phi}), we assume that $\Phi(E)=\phi(1-E/E_{0})$ has a cut off, i.e., $\phi(x)=0$ for $x<0$ and $\lim_{x\to +0}x^{-n}\phi(x) = C>0$ for $n\ge 0$, where $n$ is not necessarily integer here. If $V<E_{0}$ and $V$ is sufficiently close to $E_{0}$, (\ref{for_rho}) and (\ref{for_p}) yield 
\begin{eqnarray}
\rho &\approx & 4 \sqrt{2}\pi C\epsilon^{n+3/2}A_{n}, \\
p&\approx & 2^{3/2}\frac{4}{3}\pi C\epsilon^{n+5/2}B_{n},
\end{eqnarray}
in the lowest order, where 
\begin{equation}
 A_{n}=\frac{\sqrt{\pi}\Gamma(n+1)}{2\Gamma(n+5/2)}, ~~
 B_{n}=\frac{3\sqrt{\pi}\Gamma(n+1)}{4\Gamma(n+7/2)}, ~~
 \epsilon=\frac{E_{0}}{V}-1.
\end{equation}
Thus, we obtain the following polytropic equation of state 
\begin{equation}
p\approx K\rho^{\gamma},
\end{equation}
where 
\begin{equation}
\gamma=\frac{2n+5}{2n+3},~~
K=\frac{2}{3}[4\sqrt{2}\pi C]^{-2/(2n+3)}\frac{B}{A^{(2n+5)/(2n+3)}}
\end{equation}
or $\rho=\tilde{K}p^{\gamma^{-1}}$ with $\tilde{K}=K^{-\gamma^{-1}}$. In this lowest order, we can easily find
\begin{equation}
I\approx -\frac{5\gamma-6}{5\gamma^{2}}\tilde{K}^{2}p^{2(1-\gamma)/\gamma}
\end{equation}
Therefore, if $V$ is sufficiently close to $E_{0}$, or equivalently, $p$ is sufficiently small, $I$ is negative and hence Theorem~\ref{theo_bs} applies for $\gamma>6/5$ or $0\le n<7/2$. It is interesting to note that 
$\gamma=6/5$ is the critical value beyond which the polytrope has a surface of finite radius in Newtonian gravity. Note that the equation of state for the low-pressure regime  cannot be that for isothermal gas.

\subsection{Example of a step-function energy distribution} \label{sec_ex}

In this section and Section \ref{sect_parameters} we construct an explicit family of examples of static solutions of the Einstein-Vlasov system that are spherically symmetric and unique. \par
Let $E_0>0$. We consider the ansatz
\begin{equation} \label{ansatz_step}
F({\bf x}, v) = \Phi(E) = \Theta(E_0 - E),
\end{equation}
where $\Theta$ denotes the Heaviside step function. This ansatz describes a particle distribution where the energy is evenly distributed over the particles up until a cut-off energy $E_0$. \par
We can calculate $\varrho(V)$ and $p(V)$ explicitly from the formulas (\ref{for_rho}) and (\ref{for_p}). We obtain
\begin{align}
\varrho(V) &= \frac{\pi}{2}\left[\frac{E_0}{V} \sqrt{\left(\frac{E_0}{V}\right)^2-1}\left(2\left(\frac{E_0}{V}\right)^2-1\right)-\ln\left(\sqrt{\left(\frac{E_0}{V}\right)^2-1}+\frac{E_0}{V}\right)\right], \\
p(V) &= \frac{\pi}{6}\left[\frac{E_0}{V} \sqrt{\left(\frac{E_0}{V}\right)^2-1}\left(2\left(\frac{E_0}{V}\right)^2-5\right) +3 \ln\left(\sqrt{\left(\frac{E_0}{V}\right)^2-1}+\frac{E_0}{V}\right)\right].
\end{align} 
For the function $I(V)$ an explicit formula can be obtained, too. We have
\begin{equation} \label{for_i_rho_p}
I = \frac{\varrho + p}{5(\varrho + 3p)^2} \left(30 p \frac{\mathrm d\varrho}{\mathrm dp} + (\varrho + 11p)\left(\frac{\mathrm d\varrho}{\mathrm d p}\right)^2 + 5\left(3 p^2 + 4 p \varrho + \varrho^2\right)\frac{\mathrm d^2 \varrho}{\mathrm dp^2}\right).
\end{equation}
We calculate
\begin{align}
\frac{\mathrm d\varrho}{\mathrm dp} &= \frac{\mathrm d\varrho}{\mathrm dV} \frac{\mathrm dV}{\mathrm dp} = \frac{\frac{\mathrm d\varrho}{\mathrm dV}}{\frac{\mathrm dp}{\mathrm dV}} = \frac{3E_0^2}{E_0^2-V^2},\\
\frac{ \mathrm d^2 \varrho}{\mathrm dp^2} &= \frac{1}{\left(\frac{\mathrm dp}{\mathrm dV}\right)^3} \left( \frac{\mathrm d^2\varrho}{\mathrm dV^2}\frac{\mathrm dp}{\mathrm dV} - \frac{\mathrm d\varrho}{\mathrm dV}\frac{\mathrm d^2 p}{\mathrm dV^2} \right) = -\frac{9E_0 V^7}{2 \pi (E_0^2-V^2)^4} \sqrt{\frac{E_0^2}{V^2}-1}.
\end{align}
Inserting into (\ref{for_i_rho_p}) yields
\begin{align}
I(V) &= 4E_0^3 \Bigg(24E_0^7 - 108 E_0^5 V^2 + 139 E_0^3 V^4 - 55 E_0 V^6  \label{long_formula_i} \\
&\qquad\qquad- 5\sqrt{\frac{E_0^2}{V^2}-1}V^5 \left(5V^2-6 E_0^2\right) \ln\left(\sqrt{\frac{E_0^2}{V^2}-1}+\frac{E_0}{V}\right) \Bigg) \nonumber \\ 
&\times \Bigg(5\left(E_0^2 - V^2\right) \Bigg(E_0\sqrt{\frac{E_0^2}{V^2}-1} V \left(2E_0^2 - 3V^2 \right) \nonumber \\
&\hspace{5cm} + V^4 \ln\left(\sqrt{\frac{E_0^2}{V^2}-1}+\frac{E_0}{V}\right)\Bigg)^2\Bigg)^{-1}. \nonumber
\end{align}
Using l'H\^{o}pital's rule we can confirm that $I(V) \to -\infty$, as $V\to E_0$, which corresponds to $p\to 0$ (as stated in Theorem \ref{main_the}). If a particular choice for $E_0$ is made a certain equation of state is fixed and one can calculate the corresponding pressure $p_0$ such that $I(p) \leq 0$ for all $p\leq p_0$ by solving the equation $I(V(p_0))=0$. To illustrate this we consider the example $E_0=0.9$. The resulting functions for $\varrho(V)$ and $p(V)$ can be found in Figure \ref{fig_rhopstep}. The equation of state is illustrated in Figure \ref{fig_eos}. Finally we obtain $I(V)$ as drawn in Figure \ref{fig_step}. 
\begin{figure}[h]
\begin{minipage}{0.49\linewidth}
\begin{center}
\setlength{\unitlength}{1cm}
\begin{picture}(9,5)(0,0)
\put(0,0.225){\includegraphics[width=7.5cm]{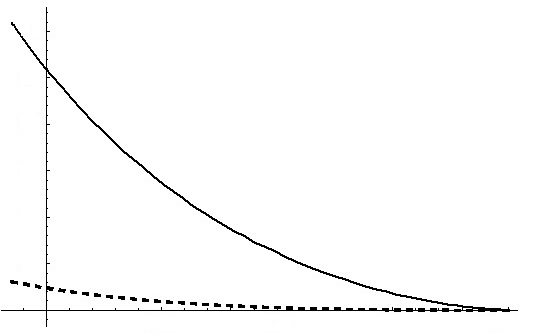}}
\put(0.5, 0.125){$0.7$}
\put(2.075, 0.125){$0.75$}
\put(3.65, 0.125){$0.8$}
\put(5.225, 0.125){$0.85$}
\put(6.8, 0.125){$0.9$}
\put(7.2, 0.525){$V$}
\put(0.2,0.4){$0$}
\put(0.2,1.65){$1$}
\put(0.2,2.9){$2$}
\put(0.2,4.2){$3$}
\put(1.5,1){$p$}
\put(1.5,3.2){$\varrho$}
\end{picture}
\caption{$\varrho(V)$ and $p(V)$ (dashed) for $E_0=0.9$ \label{fig_rhopstep}}
\end{center}
\end{minipage}
\begin{minipage}{0.49\linewidth}
\begin{center}
\setlength{\unitlength}{1cm}
\begin{picture}(9,5)(0,0)
\put(0,0){\includegraphics[width=7.5cm]{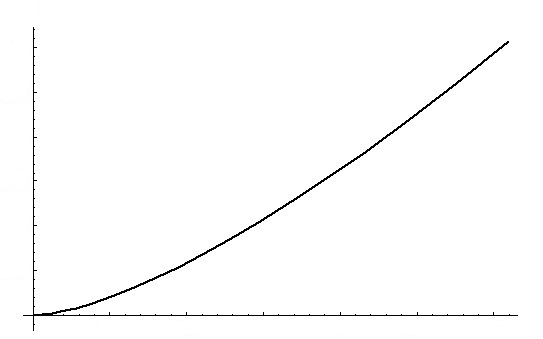}}
\put(0.35, -0.1){$0$}
\put(1.42, -0.1){$0.5$}
\put(2.48, -0.1){$1$}
\put(3.55, -0.1){$1.5$}
\put(4.62, -0.1){$2$}
\put(5.68, -0.1){$2.5$}
\put(6.75, -0.1){$3$}
\put(7.2, 0.25){$\varrho$}
\put(0.1,0.2){$0$}
\put(-0.05,1.43){$0.1$}
\put(-0.05,2.67){$0.2$}
\put(-0.05,3.9){$0.3$}
\put(0.3,4.5){$p$}
\end{picture}
\caption{Equation of state $p(\varrho)$ for the step function ansatz (\ref{ansatz_step}) and $E_0=0.9$ \label{fig_eos}}
\end{center}
\end{minipage}
\end{figure}
\begin{figure}[h]
\begin{center}
\setlength{\unitlength}{1cm}
\begin{picture}(9,5)(0,0)
\put(0,0){\includegraphics[width=9cm]{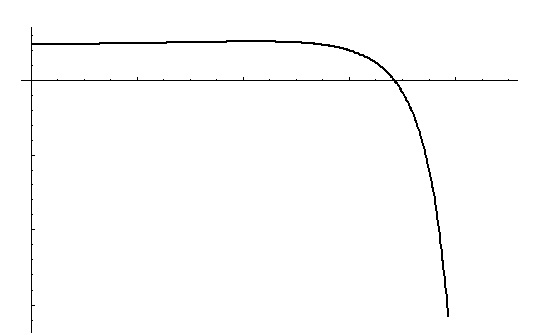}}
\put(0.45, 3.8){$0$}
\put(2.19, 3.8){$0.2$}
\put(3.93, 3.8){$0.4$}
\put(5.66, 3.8){$0.6$}
\put(6.4, 3.8){$V_0$}
\put(7.4, 3.8){$0.8$}
\put(8.8, 4.1){$V$}
\put(-0.3,0.35){$-30$}
\put(-0.3,1.6){$-20$}
\put(-0.3,2.85){$-10$}
\put(0,4.1){$0$}
\put(0.3,5.25){$I(V)$}
\end{picture}
\caption{Plot of $I(V)$ for $E_0=0.9$ \label{fig_step}}
\end{center}
\end{figure}
We calculate $V_0 \approx 0.68508$ and the corresponding $p_0 = p(V_0) \approx 0.30645$. As shown in Section \ref{sect_main}, a spherically symmetric static solution of the Einstein-Vlasov system with the ansatz (\ref{ansatz_step}) is unique, if $\sup_x p(x) \leq p_0$. However, at this stage of the analysis it is still open if there exist static solutions corresponding to the ansatz (\ref{ansatz_step}) with cut off energy $E_0 = 0.9$ which satisfy $\sup_{x\in\mathbb R^3}p(x) \leq p_0$. This issue is addressed in the next section.

\subsection{Parameters for unique static solutions} \label{sect_parameters}

Now we construct static solutions of the spherically symmetric Einstein-Vlasov system by numerical means to obtain some insight in the condition that $\sup_{x}p(x)$ must not be too large. For a spherically symmetric regular solution, the space-time manifold is $\mathscr M \cong \mathbb R \times \mathbb R^3$ and can be equipped with coordinates $t\in \mathbb R, r\in[0,\infty), \vartheta\in [0,\pi], \varphi\in[0,2\pi)$ such that the metric reads
\begin{equation}
g=-e^{2\mu(r)} \mathrm dt^2 + e^{2\lambda(r)}\mathrm dr^2 + r^2 \mathrm d\vartheta^2 + r^2 \sin^2\vartheta\, \mathrm d\varphi^2
\end{equation}
for functions $\mu,\lambda:\mathbb R_+\to\mathbb R$. Define the Hawking mass
\begin{equation}
m(r) = 4\pi \int_0^r s^2\varrho(s) \mathrm ds.
\end{equation}
It can be shown that in spherical symmetry the static Einstein-Vlasov system reduces to the integro-differential equation
\begin{subequations}
\begin{eqnarray}
\mu'(r) &=& \frac{1}{1-\frac{2m(r)}{r}}\left(4\pi r p(r) + \frac{m(r)}{r^2}\right), \label{ss1} \\
\mu(0) &=& \mu_c < 0, \label{ss2}
\end{eqnarray}
\end{subequations}
cf.~\cite{rr92}. We observe that $V(r)=e^{\mu(r)}$ is strictly increasing. Since $p(V)$ is decreasing in $V$, cf.~\ref{for_p}), the maximum value will be attained at $V_c := e^{\mu_c}$. So for a spherically symmetric solution we have 
\begin{equation}
\sup_{x\in\mathbb R^3} p(x) = p\left(e^{\mu_c}\right).
\end{equation}
The aim now is to find values for $E_0$ and $\mu_c$ such that the corresponding spherically symmetric solution satisfies $\sup_x p(x) \leq p_0$, thus is unique.\par
A spherically symmetric static solution of the Einstein Vlasov system can be calculated numerically by integrating (\ref{ss1})--(\ref{ss2}) using the methods described in \cite{ar07}. Thereby a solution is (uniquely) determined by the choice of $\mu_c$ and $E_0$. When integrating the function $\mu(r)$ outwards along the radial axis, it will asymptotically approach a fixed value 
\begin{equation}
\mu_\infty := \lim_{r\to\infty} \mu(r).
\end{equation}
An asymptotically flat solution, however, satisfies $\mu_\infty = 0$. After a numerical solution has been constructed, this can be achieved by a rescaling of the time coordinate. Since $E=g(\partial_t,p)$ this in turn affects the particle energy and changes the role of the cut-off energy $E_0$. So $\mu_\infty$, $\mu_c$, and $E_0$ are not independent parameters and the construction of asymptotically flat, spherically symmetric static solutions requires a bit more care. \par
As described for example in \cite{ar07}, it can be done by introducing the variable $y=e^\mu / E_0$. This substitution makes the cut-off energy $E_0$ disappear as a free parameter from the problem, as can be seen as follows. Using the formula (\ref{def_e}) for the energy $E$, we obtain
\begin{equation}
\Phi(E) = \Theta(E_0-E) = \left[1 - \frac{e^{\mu}\sqrt{1+v^2}}{E_0}\right]_+ = \left[1 - y \sqrt{1+v^2}\right]_+.
\end{equation}
The system (\ref{ss1})--(\ref{ss2}) becomes
\begin{subequations}
\begin{eqnarray}
y'(r) &=& \frac{y(r)}{1-\frac{2m(r)}{r}}\left(4\pi r p(r) + \frac{m(r)}{r^2}\right), \label{ss1y} \\
y(0) &=& y_c < 1. \label{ss2y}
\end{eqnarray}
\end{subequations}
A solution in terms of $y$ is then uniquely (as a spherically symmetric solution) determined by the central value $y_c := y(r=0)$. After a solution in terms of $y$ has been constructed, the values of $\mu_c$ and $E_0$ can be calculated, such that $\mu_\infty=0$. \par
In this context we would like to construct a solution with minimal potential $V_c:=e^{\mu_c}$ fulfilling $V_0\leq V_c$, where $V_0$ is the zero of $I(V)$, which is solely determined by the choice of $E_0$ (and $\phi$). For such a solution we have $p(x) \leq p(e^{\mu_c})$ and $I(V) \leq 0$ everywhere. \par
Since, as just described, the values for $\mu_c$ and $E_0$ cannot be chosen directly, we carry out a parameter study, cf.~Figure \ref{fig_e0}. 
\begin{figure}[h]
\begin{center}
\setlength{\unitlength}{0.9cm}
\begin{picture}(15,10)(0,0)
\put(0,0.3){\includegraphics[width=12.6cm]{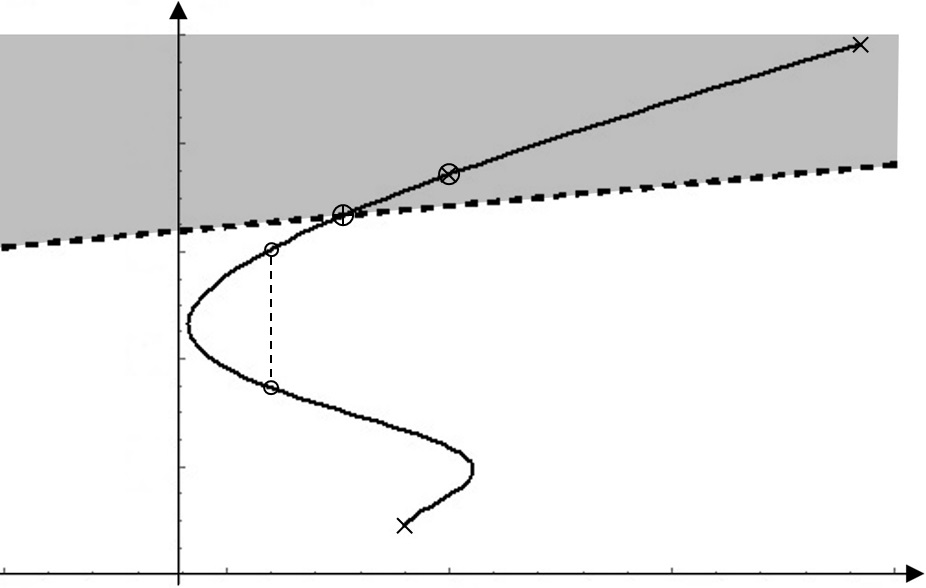}}
\put(-0.2,0){$0.8$}
\put(3.2,0){$0.85$}
\put(6.6,0){$0.9$}
\put(9.9,0){$0.95$}
\put(13.5,0){$1$}
\put(14.1,0.35){$E_0$}
\put(2.1, 0.4){$0$}
\put(2.1, 2.02){$0.2$}
\put(2.1, 3.64){$0.4$}
\put(2.1, 5.26){$0.6$}
\put(2.1, 6.88){$0.8$}
\put(2.1, 8.5){$1$}
\put(2.8, 9){$V$}
\put(13.7,6.55){$V_0(E_0)$}
\put(13.3,8.4){$y_c=0.99$}
\put(6.5,1.1){$y_c=0.1$}
\put(4.3,4){two solutions with the same surface potential $E_0$}
\put(3.6,7.5){low-pressure regime}
\put(5.3,5.5){intersection point, $\hat y_c = 0.76$}
\put(5.3, 5.05){$(\hat E_0 \approx 0.875, \hat V_c \approx 0.665)$}
\put(7.08,6.42){example of Sect.~5.2}
\end{picture}
\caption{For each choice of $E_0$ the function $I(V)$ has one zero which is called $V_0(E_0)$. The tuples $(E_0,V_0)$ lie on the dashed line. The continuous line represents a succession of central values $y_c$ for $y(r)$ between $0.1$ and $0.99$ (marked with $\times$), and the resulting tuples $(E_0, V_c)$, where $V_c=e^{\mu_c}$ is the (minimal) potential at the center. The solution at the intersection is marked with $\oplus$ and denoted with a hat. The solutions which has been discussed as an example in Sections \ref{sec_ex} and \ref{sect_parameters} above is marked with a $\otimes$. 
 \label{fig_e0}}
\end{center}
\end{figure}
First, for each value of $E_0$, we calculate $V_0$ by solving
$I(V_0)=0$, where $I(V)$ is given in (\ref{long_formula_i}). We obtain the dashed line. Then we calculate steady states for a succession of central values $y_c$ between $0.1$ and $0.99$, by integrating (\ref{ss1y})--(\ref{ss2y}). Linking the $\left( E_0,e^{\mu_c}\right)$-points corresponding to these solutions one obtains the continuous line in Figure \ref{fig_e0}. The part of the continuous line lying above the dashed line corresponds to solutions with $V_0\leq e^{\mu_c}$, lying in the low-pressure regime. The intersection (marked with a ``$\oplus$'' in Figure \ref{fig_e0}) lies approximately at $\hat y_c=0.76$, corresponding to $\hat E_0\approx 0.875$, $\hat V_c\approx 0.665$. In order to compare with the example, depicted in Figures \ref{fig_rhopstep} and \ref{fig_eos}, and we consider a solution corresponding to $y_c=0.83$, $E_0\approx 0.902$, $V_c\approx 0.748$ (marked with an ``$\otimes$'' in Figure \ref{fig_e0}). If we compare this value for $V_c$ to $V_0 \approx 0.685$ that we have calculated above for $E_0=0.9$ we see that this particular spherically symmetric solution is in the realm where the uniqueness theorem (Theorem \ref{main_the}) is valid. \par
Figure \ref{fig_e0} contains two messages. First, that there indeed exist static solutions of the Einstein-Vlasov system in the low-pressure regime, i.e.~solutions that the uniqueness theorem applies to. Second, it gives some indication that uniqueness does not hold for arbitrary pressures. Observe that the continuous line takes no turns above the dashed line (in the low-pressure regime), whereas it oscillates below the dashed line. If the continuous line oscillates, this means that there are several spherically symmetric solutions corresponding to the same surface potential $E_0$. One example ($E_0\approx 0.86$) is marked by ``$\circ$'' in Figure \ref{fig_e0}. So in the realm below the dashed line the solutions are not unique, not even under the restriction of spherical symmetry. The question whether or not there also exist several solutions with the same surface potential, which are not spherically symmetric, can of course not be answered with this parameters study. Furthermore for some solutions in the low-pressure regime there exist solutions with higher pressures that have the same surface potential. For example, for $E_{0}\approx 0.902$, we can see that there are two solutions below the dashed line with $y_{c}$ between $0.1$ and $0.76$ besides the unique solution marked with an ``$\otimes$'' above the dashed line. So, in conclusion, uniqueness seems to hold only {\em within} the low-pressure regime. In other words, the uniqueness theorem in the current paper assures that for a given value of surface potential, the low-pressure static solution in an asymptotically flat spacetime is spherically symmetric and uniquely determined, without excluding the existence of high-pressure solutions which may be or may not be spherically symmetric.

\subsection{Comparison to astrophysical objects}

The concentration parameter 
\begin{equation}
\Gamma = \sup_{r\in(0,\infty)} \frac{2m(r)}{r}
\end{equation}
is a dimensionless quantity indicating how relativistic the solution is. The solution is very relativistic if $\Gamma$ is large. As a generalization to the well-known Buchdahl inequality, the upper bound $\Gamma \leq \frac 8 9$ has been shown for a large class of matter models including Vlasov matter and the perfect fluid model \cite{a08}. Moreover, the smaller $\mu_c<0$ is chosen (or equivalently the smaller $y_c$ is chosen, cf.~Figure \ref{fig_e0}) the bigger the value of $\Gamma$ will be. This has been made precise in \cite{a07}. \par
Let $\hat \Gamma$ be the concentration parameter of the solution corresponding to $\hat y_c$, the intersection point in Figure \ref{fig_e0}. Then, if another solution has $\Gamma < \hat\Gamma$, its maximal (central) potential $V_c$ will be larger than the maximal potential $\hat V_c$ of the critical solution. This means it will be unique. We calculate $\hat \Gamma \approx 0.292$. It might be instructive to set this number into relation to values of $\Gamma$ for other objects in the universe. \par
Taking into account physical units, at the surface of a spherical object
with mass $M$ and radius $R$ we have $\Gamma=2MG/(c^2r)$, where $G$ is
the gravitational constant and $c$ is the speed of light. A very extreme
situation is the surface of a neutron star. The survey article
\cite{of16} suggests that for a model calculation one can assume
$M=3\cdot 10^{30}\,\mathrm{kg}$ and $R=10\,\mathrm{km}$. This yields
$\Gamma \approx 0.4424$. A value clearly larger than $\hat \Gamma$ for
the family with ansatz function given in (\ref{ansatz_step}). So Theorem \ref{theo_bs} is not
suitable in the regime of strong gravity like at the surface of a neutron star. The authors of \cite{bs90},  where Theorem \ref{theo_bs} emanates from, make a similar remark. \par
Taking radius and mass of the sun however, we calculate $\Gamma \approx 4.24 \cdot 10^{-6}$ at the surface of the sun. This is clearly smaller than $\hat \Gamma$. As application of Theorem \ref{main_the} we have globular clusters in mind. Since the typical mass and size of a globular cluster are hundreds of thousands solar masses and several parsecs, respectively, the ratio $\Gamma$ is approximately $10^{-8}$. This means that if the distribution function is a decreasing function of $E$ up to the cut off $E_0 < \infty$, Theorem \ref{theo_bs} applies to globular clusters.  \par
Finally we remark that there might be a connection between stability and the fact that a solution necessarily is spherically symmetric, like in the non-relativistic case. Even though only very little is known about the question of stability of static solutions of the Einstein Vlasov system, it is conjectured that solutions with small values of $\Gamma$ are  stable, whereas highly relativistic solutions with large values of $\Gamma$ are conjectured to be unstable. The reader is referred to \cite{ar06} for a numerical study of this question.

\end{document}